\tikzstyle{mycircle}=[circle,draw=black,fill=black!25,fill opacity = 0.3,text opacity=1,inner sep=0pt,minimum size=18pt,font=\small]
\tikzstyle{smallcircle}=[circle, draw=black, fill=white, inner sep=0pt, minimum size=15pt, font=\tiny]
\tikzstyle{vertex_small}=[circle,draw,fill, inner sep=2pt]
\tikzset{
	position/.style args={#1:#2 from #3}{
		at=(#3.#1), anchor=#1+180, shift=(#1:#2)
	}
}
\definecolor{italyGreen}{RGB}{0, 146, 70}
\definecolor{italyRed}{RGB}{206, 43, 55}
\newcommand{\N}{\mathds{N}}
\newcommand{\Z}{\mathds{Z}}
\newcommand{\Q}{\mathds{Q}}
\newcommand{\HI}{\mathds{H}}
\newcommand{\Qnn}{\mathds{Q}^+_0} %
\newcommand{\Qnp}{\mathds{Q}^-_0} %
\acrodef{bdvd}[BDVD]{\textsc{Bounded-Density Vertex Deletion}}
\acrodef{bded}[BDED]{\textsc{Bounded-Density Edge Deletion}}
\acrodef{tbded}[$\tau$-BDED]{$\tau$-\textsc{Bounded-Density Edge Deletion}}
\acrodef{xlc}[X$\ell$C]{\textsc{Exact Cover By $\ell$-Sets}}
\acrodef{vc}[VC]{\textsc{Vertex Cover}}
\acrodef{gf}[GF]{\textsc{General Factors}}
\acrodef{rflow}[\textsc{2GT}]{\textsc{2-Layer General Transshipment}}
\acrodef{flow}[\textsc{Flow}]{\textsc{Maximum $s$-$t$-Flow}}
\acrodef{pm}[\textsc{PM}]{\textsc{Perfect Matching}}
\newcommand{\orient}{\ensuremath{\phi}}
\DeclareMathOperator{\val}{val}
\DeclareMathOperator{\capacity}{cap}
\DeclareMathOperator{\abs}{abs}
\DeclareMathOperator{\tw}{tw}
\newcommand{\problemDef}[3]{%
\begin{center}
	\setlength{\tabcolsep}{2pt}
	\begin{tabular}{@{}lp{12cm}@{}}
		\multicolumn{2}{@{}l}{\textsc{#1}} \\%
		\textbf{Input:} & #2 \\%
		\textbf{Question:} & #3 \\%
	\end{tabular}
\end{center}%
}
\newcommand{\OptproblemDef}[3]{%
\begin{center}
	\setlength{\tabcolsep}{2pt}
	\begin{tabular}{@{}lp{12cm}@{}}
		\multicolumn{2}{@{}l}{\textsc{#1}} \\%
		\textbf{Input:} & #2 \\%
		\textbf{Task:} & #3 \\%
	\end{tabular}
\end{center}%
}
\newcommand{\appref}[1]{{\hyperref[proof:#1]{\appsymb}}}
\newcommand{\appendixproof}[2]{%
  \gappto{\appendixProofs}
  {
    \subsection{Proof of \cref{#1}}\label{proof:#1}
    #2
  }
}
\title{Density Matters: A Complexity Dichotomy of Deleting Edges to Bound Subgraph Density}
\titlerunning{Density Matters: A Complexity Dichotomy of $\tau$-Bounded-Density Edge Deletion}
\author{Matthias Bentert}{Technische Universität Berlin, Germany \& University of Bergen, Norway}{bentert@tu-berlin.de}{https://orcid.org/0009-0009-0705-972X}{}
\author{Tom-Lukas Breitkopf}{Algorithmics and Computational Complexity, Technische Universität Berlin, Germany}{t.breitkopf@tu-berlin.de}{https://orcid.org/0009-0008-2875-1945}{}
\author{Vincent Froese}{Algorithmics and Computational Complexity, Technische Universität Berlin, Germany}{vincent.froese@tu-berlin.de}{https://orcid.org/0000-0002-8499-0130}{}
\author{Anton Herrmann}{Algorithmics and Computational Complexity, Technische Universität Berlin, Germany}{a.herrmann@tu-berlin.de}{https://orcid.org/0009-0008-8473-9043}{}
\author{André Nichterlein}{Algorithmics and Computational Complexity, Technische Universität Berlin, Germany}{andre.nichterlein@tu-berlin.de}{https://orcid.org/0000-0001-7451-9401}{}
\authorrunning{M.\ Bentert, T.\ Breitkopf, V.\ Froese, A.\ Herrmann, A.\ Nichterlein}
\keywords{Transshipment, Maximum Flow, General Factors, Matching, Graph Modification Problem}
\begin{document}
\maketitle

\begin{abstract}
	We study \ac{tbded}, where given an undirected graph~$G$, the task is to remove as few edges as possible to obtain a graph~$G'$ where no subgraph of~$G'$ has density more than~$\tau$.
	The density of a (sub)graph is the number of edges divided by the number of vertices.
	This problem was recently introduced and shown to be NP-hard for~$\tau \in \{2/3, 3/4, 1 + 1/25\}$, but polynomial-time solvable for~$\tau \in \{0,1/2,1\}$ [Bazgan et al.,~JCSS 2025].
	We provide a complete dichotomy with respect to the target density~$\tau$:
	\begin{enumerate}
		\item If $2\tau \in \N$ (half-integral target density) or~$\tau < 2/3$, then \ac{tbded} is polynomial-time solvable.
		\item Otherwise, \ac{tbded} is NP-hard.
	\end{enumerate}
	We complement the NP-hardness with fixed-parameter tractability with respect to the treewidth of~$G$.
	Moreover, for integral target density~$\tau \in \N$, we show \ac{tbded} to be solvable in randomized~$O(m^{1 + o(1)})$~time.
	Our algorithmic results are based on a reduction to a new general flow problem on restricted networks that, depending on~$\tau$, can be solved via \acl{flow} or \acl{gf}.
	We believe this connection between these variants of flow and matching to be of independent interest.
\end{abstract}

\newpage

\section{Introduction}
\label{sec:intro}

Finding a densest subgraph in a given undirected graph is an ``evergreen research topic''~\cite{LMFB24}.
The density of a (sub)graph is the number of edges divided by the number of vertices, which is equal to half the average degree.
A graph with $n$ vertices can thus have a density up to~$(n-1)/2$; with this maximum being reached if and only if the graph is a clique.
While a reader unfamiliar with the problem might suspect it to be NP-hard---like \textsc{Clique}---it can in fact be solved in polynomial time:
This was first discovered 1979 and published 1982 by \citet{PQ82} with an algorithm based on maximum-flow computations (cf.~\citet{LMFB24}).
An improved algorithm due to~\citet{Gol84} also relies on a maximum s-t-flow. %

In this work, we study the following closely related problem called \acf{tbded} where~$\tau \ge 0$:
Given an undirected graph~$G=(V,E)$, delete as few edges as possible such that the densest subgraph of the remaining graph has density at most~$\tau$.
This problem was recently introduced by \citet{BNV25}, who observed that for~$\tau < 1$ the remaining graph has to be a forest where each tree can have at most~$\lfloor 1 / (1-\tau) \rfloor$ vertices.
Thus, \ac{tbded} generalizes the polynomial-time solvable \textsc{Maximum Cardinality Matching} ($\tau = 1/2$) and the NP-hard \textsc{Perfect $P_3$-Packing} ($\tau = 2/3$).
For~$\tau > 1$, not much is known about the complexity of \ac{tbded} beyond NP-hardness for~$\tau$ slightly above~$1$ ($\tau = 1 + 1/c$ with~$c$ being any constant larger than~$24$~\cite{BNV25}).

Our main result is a complete complexity dichotomy for \ac{tbded}: 
\begin{theorem}
	\label{thm:dichotomy}
	If~$2\tau \in \N$ or~$\tau < 2/3$, then \acl{tbded} is polynomial-time solvable, otherwise it is NP-hard.
	Moreover, if~$\tau \in \N$, then it is solvable in randomized $O(m^{1 + o(1)})$ time.
\end{theorem}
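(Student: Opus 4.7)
The plan is to decompose \cref{thm:dichotomy} into three sub-claims indexed by the range of~$\tau$. For the easy tractable case~$\tau < 2/3$, the structural observation of~\cite{BNV25} that every component of a valid~$G'$ has at most~$\lfloor 1/(1-\tau) \rfloor \le 2$ vertices forces~$G'$ to be a matching, so \ac{tbded} reduces directly to \textsc{Maximum Cardinality Matching} and is polynomial-time solvable.

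For the main tractable range $2\tau \in \N$ with $\tau \ge 1$, I would build the reduction advertised in the abstract through the new \ac{rflow} framework. The starting point is the equivalence between ``no subgraph of~$G'$ has density more than~$\tau$'' and the existence of an edge orientation whose vertex in-degrees lie in a prescribed (half-)integral set~$D_\tau$: for~$\tau \in \N$ take $D_\tau = \{0, 1, \dots, \tau\}$, and for half-integral~$\tau$ use an analogous relaxation obtained by effectively doubling the graph so that the threshold becomes integral. I would then phrase ``retain as many edges as possible subject to this orientation constraint'' as a \ac{rflow} instance whose network is structured so that it collapses to \ac{flow} when~$\tau \in \N$ and to \ac{gf} when~$\tau \notin \N$. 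Since both downstream problems are polynomial-time solvable, and in the integral case the \ac{flow} instance has~$O(n+m)$ size, plugging in the recent near-linear randomized maxflow algorithm yields the claimed~$m^{1+o(1)}$ bound.

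For the hardness range $\tau \ge 2/3$ with $2\tau \notin \N$, I would generalize the ad-hoc constructions of~\cite{BNV25} via a reduction from \ac{xlc}. Writing~$\tau = p/q$ in lowest terms with~$q \ge 3$, each element gadget must have local density just above~$\tau$ and admit only two canonical edge-deletion solutions, with the better one corresponding to the element being covered by the exact cover. The main obstacle I anticipate is engineering a single family of gadgets that handles every such rational~$\tau$ uniformly: for each denominator~$q$ one must arrange the vertex and edge counts so that the densest sub-gadget meets~$\tau$ with a tight arithmetic equality that breaks under any nearby half-integral approximation, explaining precisely why the tractable-side threshold is half-integrality rather than integrality. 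Going from the three isolated hardness points~$\{2/3, 3/4, 1+1/25\}$ of~\cite{BNV25} to a parametric reduction covering the entire set~$\{\tau \ge 2/3 : 2\tau \notin \N\}$ is where the main technical work will lie; combining all three parts then yields both the dichotomy and the randomized running-time bound.
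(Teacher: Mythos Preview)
Your algorithmic side matches the paper closely: the $\tau<2/3$ case collapses to \textsc{Maximum Matching}, and for half-integral~$\tau$ the paper does exactly what you sketch---it formulates the orientation constraint as a \acs{rflow} instance (edge-vertices with excess set~$\{0,1\}$, original vertices with excess set~$[-\tau,0]$), scales by~$2$ to make things integral, and then lands in either \textsc{Max Flow} (integral~$\tau$, yielding the $m^{1+o(1)}$ bound) or \acl{gf} with max-gap one (genuinely half-integral~$\tau$). Your phrase ``effectively doubling the graph'' is not quite what happens; the scaling is on the transshipment capacities and excess sets, not on~$G$.

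On the hardness side there is a real divergence, and your plan has a gap. The paper does \emph{not} use a single uniform reduction from \ac{xlc} over the entire range $\tau\ge 2/3$, $2\tau\notin\N$. It splits into two regimes:
\begin{itemize}
  \item For $\tau\in[2/3,1)$ it first observes that any feasible~$G'$ is a forest whose components have at most~$\ell+1$ vertices, so every such~$\tau$ is equivalent to the canonical value~$\ell/(\ell+1)$ for some~$\ell\ge 2$. This collapses the problem to countably many thresholds, and the reduction is then from \ac{xlc} (for each fixed~$\ell\ge 3$) via a simple star-packing gadget. You do not mention this forest/collapse step, and without it your ``design a gadget for every reduced fraction $p/q$'' programme is both unnecessary and much harder in this range.
  \item For $\tau>1$ the paper abandons \ac{xlc} entirely and reduces from \textsc{Vertex Cover} on $q$-regular graphs (where~$\tau=p/q$). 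The gadgets are balanced graphs of density exactly~$\tau$ (vertex gadgets) plus a delicate path-based edge gadget (\cref{lem:edge_gadget}) supporting two specific fractional orientations that distribute an excess of~$1/q$ to either endpoint. Coprimality of~$p$ and~$q$ is used to make the vertex and edge counts come out integral, and $q$-regularity of the source graph is what lets a single deleted edge in a vertex gadget absorb the $1/q$ excess from all~$q$ incident edge gadgets.
\end{itemize}
Your proposal to push \ac{xlc} through for~$\tau>1$ with ``element gadgets of local density just above~$\tau$ admitting only two canonical deletion solutions'' is plausible in spirit but gives no indication of how to build such gadgets; the arithmetic and orientation bookkeeping in the paper's \cref{lem:edge_gadget} is the nontrivial core of the $\tau>1$ case, and nothing in your outline suggests an \ac{xlc}-compatible replacement. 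So the hardness half of your plan, as written, is a direction rather than a proof.
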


Additionally, we show fixed-parameter tractability for the parameter treewidth. 
To this end, note that for a graph of treewidth~$\tw$ a tree decomposition of width~$\tw$ can be computed in time~$2^{O(\tw^2)}n^{O(1)}$~\cite{Kor23} and a tree decomposition of width~$2\tw + 1$ in time~$2^{O(\tw)}n$~\cite{Kor21}.
\begin{theorem}
	\label{thm:fpt-tw}
	\acl{tbded} is fixed-parameter tractable with respect to treewidth.
	Given a tree decomposition of width~$\tw$ with~$\tau = \frac{a}{b}$, it can be solved in time~${(\max\{a,b\}+1)^{\tw + 1}n^{O(1)}}$.
\end{theorem}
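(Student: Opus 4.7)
The plan is to combine a local orientation characterization of the density condition with a standard nice-tree-decomposition dynamic program. Write $\tau = a/b$ in lowest terms. The key structural ingredient is: a graph $G' = (V, F)$ has maximum subgraph density at most $a/b$ if and only if there is an integral ``$b$-fold orientation,'' i.e., a map $x : F \times V \to \Z_{\geq 0}$ with $x(e, v) = 0$ for $v \notin e$, $x(e, u) + x(e, v) = b$ for every $e = \{u, v\} \in F$, and $\sum_{e \in F,\, e \ni v} x(e, v) \leq a$ for every $v \in V$. The easy direction is double counting: for any $S \subseteq V$, $b \cdot |E_{G'}[S]| = \sum_{v \in S}\sum_{e \ni v,\, e \subseteq S} x(e,v) \leq a|S|$. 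The converse follows from a Hall-type feasibility argument on the bipartite transportation network with $F$ on one side (supply $b$) and $V$ on the other (capacity $a$); integrality of $x$ is automatic because the constraint matrix of this transportation problem is totally unimodular.

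This equivalence recasts \ac{tbded} as follows: choose $F \subseteq E$ together with an orientation $x$ of the above form, minimizing $|E \setminus F|$. I would implement this by dynamic programming on a nice tree decomposition. For a bag $B_t$, the DP table is indexed by a \emph{load function} $\ell : B_t \to \{0, 1, \dots, a\}$, where $\ell(v)$ records $\sum_{e \ni v} x(e, v)$ over the edges of $G$ already introduced in the subtree rooted at $t$; the table value stores the minimum number of deletions achieving this partial state.

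The transitions follow the standard nice-decomposition vocabulary. \emph{Introduce vertex} $v$ extends each entry by setting $\ell(v) = 0$. \emph{Introduce edge} $\{u, v\}$ either marks the edge deleted (cost $+1$, loads unchanged) or picks a split $(x_u, x_v) \in \{0, \dots, b\}^2$ with $x_u + x_v = b$ and updates $\ell(u) \gets \ell(u) + x_v$ and $\ell(v) \gets \ell(v) + x_u$, discarding entries in which either load exceeds $a$. \emph{Forget vertex} $v$ projects out its coordinate; this is safe since $v$ will not appear in any later bag, so its load cannot grow and the entry has already enforced $\ell(v) \leq a$. \emph{Join} merges two entries agreeing on the bag by summing the loads coordinate-wise and summing the costs, again discarding entries where any combined load exceeds $a$. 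Correctness is the routine induction: the invariant that the table value at $(\ell, B_t)$ equals the minimum number of deletions among edge selections for the subtree of $t$ admitting an orientation that reaches load profile $\ell$ on $B_t$ (and at most $a$ on every already-forgotten vertex) is preserved by all four transitions by construction. Each bag holds at most $(a+1)^{\tw+1}$ entries, and over the $O(n)$ bags of a nice tree decomposition this yields the claimed $(\max\{a, b\} + 1)^{\tw+1}\, n^{O(1)}$ bound.

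The main obstacle is exactly the orientation characterization, together with its integrality. Without it, a DP on a tree decomposition would seem to need to track, for each connected component of the kept subgraph touching the bag, its full $(|V(C)|, |E(C)|)$ statistics (or the density deficit $a|V(C)| - b|E(C)|$), which is unbounded as the component can extend arbitrarily outside the bag. Replacing the global density condition by local per-vertex capacity constraints is precisely what reduces the per-vertex state to the alphabet $\{0, \dots, a\}$ and hence yields fixed-parameter tractability in $\tw$.
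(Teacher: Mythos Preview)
Your approach is correct and takes a more direct route than the paper. Both rest on the same structural insight—that $\rho^*(G') \le a/b$ is equivalent to the existence of an integral $b$-fold orientation with per-vertex load at most $a$ (your total-unimodularity argument is exactly the integrality step the paper obtains via \cref{lem:integral_flow_bgt} after scaling). The paper, however, packages this as a reduction chain \ac{tbded} $\to$ \ac{rflow} $\to$ \acl{gf} and then invokes the treewidth algorithm of Marx, Schlotter, and Schulz for \acl{gf} as a black box, whereas you run the dynamic program directly on a nice tree decomposition of~$G$. Your route is more elementary and self-contained; the paper's is more modular, since the same reduction chain also delivers the polynomial-time results for half-integral~$\tau$ by plugging in a different \acl{gf} algorithm.

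One caveat on the running time: the join transition as you describe it (enumerate all pairs $(\ell_1,\ell_2)$ with $\ell_1+\ell_2=\ell$) costs $(a+1)^{2(\tw+1)}$ per join node, not $(a+1)^{\tw+1}$, and multi-dimensional $(\min,+)$-convolution does not obviously speed this up. This still yields fixed-parameter tractability in~$\tw$, so the qualitative statement of the theorem stands, but matching the exact exponent $(\max\{a,b\}+1)^{\tw+1}$ would require either a more careful join (or a different decomposition shape) than what you wrote. A tiny slip: at the introduce-edge step you add $x_v$ to $\ell(u)$ and $x_u$ to $\ell(v)$; it should be the other way around, though since you enumerate all splits summing to~$b$ this does not affect correctness.
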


On the technical side, we follow the line of \citet{PQ82} and \citet{Gol84}:
We achieve our algorithmic results by efficiently reducing to a generalized flow problem (which we call \ac{rflow}).
For integral target density~$\tau \in \N$, \ac{rflow} can be solved with a simple maximum $s$-$t$-flow computation in a linear-size network with small integer capacities (at most~$\tau$).
This allows employing recent algorithms for computing maximum flows~\cite{CKLPGS23} to solve \ac{tbded} with~$\tau \in \N$ in randomized $O(m^{1 + o(1)})$ time on graphs with~$m$ edges.
For~$\tau = 1/2$, our flow approach to solve the \ac{rflow} instance fails.
The reason is that we rely on a maximum flow being integral and some arcs having capacity~$\tau$.
In such cases, a simple scaling of capacities yields integer values.
However, our flow-problem then gets an additional constraint that some arcs either transport a full flow (equal to the capacity) or no flow at all.
These ``all-or-nothing'' restrictions render the \textsc{Maximum Flow} problem NP-hard in general, even if all arcs have integer capacities of at most two.\footnote{While we did not find this statement in the scientific literature, an NP-hardness proof was given by Neal Young on stackexchange: \url{https://cstheory.stackexchange.com/questions/51243/maximum-flow-with-parity-requirement-on-certain-edges}. The full proof is included in \cref{app:np_hardness_subset_even_flow}.}
The restricted network structure (being bipartite with all arcs directed towards one partition) in our flow problem, however, allows a reduction to \acl{gf}, which is a generalization of \textsc{Maximum Cardinality Matching}.
Note that this reverses the standard approach of solving \textsc{Maximum Cardinality Matching} in bipartite graphs by a reduction to \textsc{Maximum Flow}~\cite{Sch03}.
Leveraging known algorithmic results for \acl{gf}, we obtain a polynomial-time algorithm for half-integral target density~$\tau$ and fixed-parameter tractability with respect to treewidth. %

\subparagraph{Related Work.} 
We refer to the survey of \citet{LMFB24} for an overview over \textsc{Densest Subgraph} and various approaches to solve it and its variants.

To the best of our knowledge, so far only \citet{BNV25} investigated \ac{bded}, the problem variant of \ac{tbded} where the target density is part of the input and can thus depend on~$n$.
\citeauthor{BNV25} used the allowed dependency on~$n$ to prove W[1]-hardness of \ac{bded} with respect to treewidth~\cite{BNV25} (the proof requires a target density~$\tau = 1 - 1/o(n)$).
Considering the relation between the choice of~$\tau$ and the computational complexity of the problem, they obtained the following results:
\ac{bded} is NP-hard in case~$\tau \in [\frac{2}{3}, \frac{3}{4}]$ or~$\tau = 1 + \frac{1}{c}$, where~$c$ is any constant larger than~$24$, but \ac{bded} becomes polynomial-time solvable in case~$\tau < \frac{2}{3}$ or~$\tau \in [1 - \frac{1}{n}, 1]$.

The vertex deletion variant of \ac{bded} is called \ac{bdvd} and generalizes \textsc{Vertex Cover} ($\tau = 0$) and \textsc{Feedback Vertex Set} ($\tau = 1 - 1/n$).
It was also introduced by \citet{BNV25}, who showed the following results:
\ac{bdvd} is NP-hard for any constant target density.
Both \ac{bded} and \ac{bdvd} are fixed-parameter tractable with respect to the vertex cover number, but W[1]- respectively W[2]-hard with respect to the solution size, that is, the number of edges, respectively vertices, to remove.

\citet{CCK25} studied the approximability of \ac{bdvd}.
Interestingly, they (also) focus on integral target density~$\tau \ge 2$ and provide a polynomial-time $O(\log n)$-approximation that cannot be improved to a factor of~$o(\log n)$ assuming P${}\ne{}$NP.
Their approximation algorithm indeed solves more general problems like \textsc{Supermodular Density Deletion} and \textsc{Matroid Feedback Vertex Set}.

\subparagraph{Organization.} 
After introducing basic notions in \cref{sec:prelim}, we provide an easy flow-construction for integral target density~$\tau$ as warm-up in \cref{sec:integral}.
We show our main result on the generalized flow variant in \cref{sec:flow} and complement our findings with NP-hardness results for \ac{tbded} in \cref{sec:hardness} for the remaining target densities.
We conclude in \cref{sec:concl}.

\section{Preliminaries}
\label{sec:prelim}
We set~$\N := \{0,1,\ldots\}$ to include~$0$ and denote by~$\Qnn := \{q \in \Q \mid q \ge 0\}$ ($\Qnp := \{q \in \Q \mid q \le 0\}$) the set of non-negative (non-positive) rational numbers.
For~$n \in \N$, we define~$[n] \coloneqq \{1,2,\ldots,n\}$ with~$[0] := \emptyset$.
We use $\HI$ to denote the set of all non-negative half-integral rational numbers, that is,~$\HI = \{q \in \Q \mid 2q \in \N\}$.
\subparagraph*{Graph \& Digraphs.}\label{sec:Notation}
All graphs in this work are simple and unweighted.
Let $G$ be an undirected graph and~$D$ be a digraph. 
We denote the set of vertices of $G$ and~$D$ by~$V(G)$ and~$V(D)$ and the set of edges of $G$ and arcs of~$D$ by~$E(G)$ and~$A(D)$, respectively.
For~$C \in \{G,D\}$, we denote with~$n_C$ and~$m_C$ the number of vertices and edges (arcs) of~$C$.
We denote the degree of a vertex~$v\in V(G)$ by~$\deg_G(v)\coloneqq |N(v)|$, where~$N(v)\coloneqq\{u\in V(G)\mid \{u,v\}\in E(G)\}$.
Similarly, we denote the in-degree (out-degree) of a vertex~$v \in V(D)$ by~$\deg_D^-(v)\coloneqq |N^-(v)|$ ($\deg_D^+(v)\coloneqq |N^+(v)|$), where~$N^-(v)\coloneqq\{u\in V(D)\mid (u,v)\in A(D)\}$ ($N^+(v)\coloneqq\{u\in V(D)\mid (v,u)\in A(D)\}$).
If the considered graph is clear from the context, then we drop the subscripts.
We write~$H \subseteq G$ to denote that~$H$ is a subgraph of~$G$.
For a subset~$E'\subseteq E(G)$, we write~$G-E'$ for the subgraph of~$G$ obtained by deleting the edges in~$E'$.
We denote by $K_n$ the complete graph on $n$ vertices (also called a clique of order $n$).

The \emph{density of~$G$} is~$\rho(G)\coloneqq m/n$, where the density of the empty graph is defined as zero.
We denote by~$\rho^*(G)$ the density of the densest subgraph of $G$, that is, $\rho^*(G)\coloneqq \max_{H \subseteq G}\rho(H)$.
A graph $G$ is \emph{balanced} if $\rho(H)\leq \rho(G)$ for every subgraph $H \subseteq G$; this implies~$\rho^*(G) = \rho(G)$.
We study the following problem, which is defined for every rational~$\tau \geq 0$:

\problemDef{\acf{tbded}}
{An undirected graph~$G$ and an integer~$k\in \N$.}
{Is there a subset~$F\subseteq E(G)$ with~$|F|\leq k$ such that~$\rho^*(G- F)\leq\tau$?}

Throughout the rest of this work we assume~$G$ to be connected and hence have~$n_G - 1 \leq m_G$ and~$O(m_G + n_G) = O(m_G)$.
Clearly, if~$G$ is not connected we can obtain a solution by solving each connected component individually and isolated vertices can be ignored.

\subparagraph*{Fractional Orientations.}
A crucial tool in our subsequent proofs and approaches are orientations, which follow from a dual LP formulation by~\citet[Section 3.1]{Cha00} to compute the density of a densest subgraph.
An \emph{orientation} of an undirected graph~$G$ assigns a direction to each edge~$\{u,v\}$, that is,~$\{u,v\}$ is replaced either by~$(u,v)$ or by~$(v,u)$.
\citet{Cha00} observed that the maximum indegree in any orientation of~$G$ is an upper bound for~$\rho^*(G)$.
This is also true for fractional orientations where we assign some part of an edge~$\{u,v\}$ to~$u$ and the rest to~$v$.
More formally, a \emph{fractional orientation}~$\orient$ assigns to each edge~$e = \{u,v\}$ two non-negative rationals~$\orient(e)^u$ and~$\orient(e)^v$ with~$\orient(e)^u + \orient(e)^v = 1$.
We denote with

\begin{align*}
	\deg_{\orient}^-(v) \coloneqq \sum_{u \in N(v)} \orient(\{u,v\})^v && \text{and} && \Delta_{\orient}^- \coloneqq \max_{v \in V(G)} \deg_{\orient}^-(v)
\end{align*}
the \emph{indegree} of a vertex~$v\in V(G)$ and the \emph{maximum indegree} of~$G$ with respect to the fractional orientation~$\phi$.
From now on we write~$\orient(uv)^v$ as a short-hand for~$\orient(\{u,v\})^v$.

We restate the aforementioned property as a lemma together with a stricter formulation.
\begin{lemma}
 \label{lem:density_bound}
 For every graph~$G$ and fractional orientation~$\orient$, it holds that~$\rho^*(G)\le\Delta^-_{\orient}$.
\end{lemma}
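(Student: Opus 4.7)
The plan is to prove the inequality by fixing a densest subgraph $H \subseteq G$ with $\rho(H) = \rho^*(G)$ and then showing $|E(H)| \le |V(H)| \cdot \Delta^-_\phi$, from which the claim follows by dividing by $|V(H)|$. (If $H$ is empty, the statement is trivial since $\rho^*(G) = 0 \le \Delta^-_\phi$.)

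The key observation is a double-counting identity: since every edge $e = \{u,v\}$ satisfies $\phi(e)^u + \phi(e)^v = 1$, we have
\[
|E(H)| \;=\; \sum_{\{u,v\} \in E(H)} \bigl(\phi(uv)^u + \phi(uv)^v\bigr) \;=\; \sum_{v \in V(H)} \sum_{u \in N_H(v)} \phi(uv)^v,
\]
where the second equality regroups each edge's two fractional contributions at its two endpoints. Now I would compare the inner sum to the indegree of $v$ under $\phi$ in the full graph $G$. Since $N_H(v) \subseteq N_G(v)$ and all $\phi(uv)^v$ are non-negative, we get
\[
\sum_{u \in N_H(v)} \phi(uv)^v \;\le\; \sum_{u \in N_G(v)} \phi(uv)^v \;=\; \deg_\phi^-(v) \;\le\; \Delta^-_\phi.
\]
Summing over all $v \in V(H)$ yields $|E(H)| \le |V(H)| \cdot \Delta^-_\phi$, hence $\rho^*(G) = \rho(H) \le \Delta^-_\phi$.

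I do not anticipate a real obstacle here; the only subtlety worth flagging is that $\deg_\phi^-(v)$ is defined with respect to $G$ rather than $H$, which is precisely why the non-negativity of fractional orientation values is used to discard contributions from neighbors outside $H$. Everything else is an elementary rearrangement of a sum.
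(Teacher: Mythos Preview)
Your proof is correct. The paper does not actually supply its own proof of this lemma; it states the bound as an observation attributed to \citet{Cha00} and moves on. Your double-counting argument is the standard elementary justification: summing $\phi(e)^u+\phi(e)^v=1$ over $E(H)$, regrouping by endpoint, and bounding each vertex's contribution by $\Delta^-_\phi$ using non-negativity to discard neighbors outside $H$. This is exactly the averaging argument underlying Charikar's LP-dual observation, so nothing is missing or different in spirit.
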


\begin{lemma}[\citet{Cha00}]
	\label{lem:density_orientation}
	For every graph $G$, there is a fractional orientation~$\orient$ with $\rho^*(G) = \Delta_{\orient}^-$.
\end{lemma}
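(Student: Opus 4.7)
My plan is to prove this the way Charikar originally did, via linear programming duality applied to his LP relaxation of \textsc{Densest Subgraph}. The primal has variables $x_e \ge 0$ for each edge and $y_v \ge 0$ for each vertex, maximizes $\sum_{e \in E(G)} x_e$, and imposes $x_e \le y_u$ and $x_e \le y_v$ for every edge $e = \{u,v\}$ together with the normalization $\sum_{v \in V(G)} y_v \le 1$. Charikar showed that the optimum of this LP equals $\rho^*(G)$; I would invoke this as a black box (or reprove it in one sentence by sorting the vertices by $y_v$ and choosing a level set).

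Next, I would compute the dual. Introducing non-negative variables $\orient(e)^u, \orient(e)^v$ for the two inequalities per edge and $d \ge 0$ for the normalization, the dual reads: minimize $d$ subject to $\orient(e)^u + \orient(e)^v \ge 1$ for every edge $e = \{u,v\}$ and $\sum_{u \in N(v)} \orient(uv)^v \le d$ for every vertex $v$. At any dual optimum one may assume the edge constraints hold with equality: if $\orient(e)^u + \orient(e)^v > 1$ for some edge $e = \{u,v\}$, then decreasing the larger of the two coordinates by the excess preserves non-negativity and only weakly decreases the indegree sums, so feasibility is maintained. Hence the optimal $\orient$ is a genuine fractional orientation in the sense of the excerpt, and the second family of constraints becomes precisely $\deg_\orient^-(v) \le d$.

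By strong LP duality the dual optimum equals the primal optimum $\rho^*(G)$, so the resulting $\orient$ satisfies $\Delta_\orient^- \le \rho^*(G)$; combined with \cref{lem:density_bound} this yields equality. The main obstacle is really just setting up the primal LP and executing the dual computation correctly, together with the small ``round to equality'' argument on the edge constraints; the remainder is mechanical. A combinatorial alternative would phrase the problem as fractionally distributing one unit of supply from each edge to its two endpoints so as to minimize the maximum vertex load, and invoke a Hall/max-flow argument identifying this minimum with $\max_{S \subseteq V(G)} |E(G[S])|/|S| = \rho^*(G)$. I would prefer the LP-duality route since it is shorter and aligns with the Charikar citation already present in the statement.
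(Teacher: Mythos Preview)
The paper does not supply its own proof of this lemma; it is stated as a result due to Charikar and used as a black box, with only the remark that the LP framework also yields a polynomial-time algorithm. Your LP-duality argument is precisely Charikar's original proof, so your proposal is correct and fully in line with the cited source. One small technical quibble: your ``decrease the larger coordinate by the excess'' step can fail to preserve non-negativity when the excess exceeds~$1$ (e.g.\ $\orient(e)^u=\orient(e)^v=2$); replacing it by rescaling both coordinates by $1/(\orient(e)^u+\orient(e)^v)$, or simply noting that any excess strictly increases some vertex sum and hence cannot occur at a dual optimum, fixes this without changing the argument.
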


It follows from the work of \citet{Cha00} and the fact that LPs can be solved in polynomial time~\cite{Sch03}, that an orientation~$\orient$ with $\rho^*(G) = \Delta_{\orient}^-$ can be computed in polynomial time.

\subparagraph*{Flows and Factors.}
Studied for well over half a century \textsc{Maximum Flow} is still subject of active research~\cite{CKLPGS23}.
We refer to standard text books for an overview on flows~\cite{AMO93,Sch03} and subsequently recall very briefly only the most relevant notions and concepts.
In the basic setting, we are given a directed graph~$D$ forming the flow network with two distinguished vertices~$s,t \in V(D)$ and arcs that are equipped with capacities~$\capacity\colon A(D) \rightarrow\Qnn$.
A flow~$f\colon A(D) \rightarrow\Qnn$ assigns to each arc~$a \in A(D)$ a non-negative amount of flow~${f(a) \leq \capacity(a)}$ such that for each vertex~$v \in V(D) \setminus \{s,t\}$ the amount~$f_{\Sigma}^-(v) := \sum_{u \in N^-(v)} f(u, v)$ of flow entering~$v$ is equal to the amount~$f_{\Sigma}^+(v) := \sum_{u \in N^+(v)} f(v,u)$ of flow leaving~$v$.
Here, and throughout the paper, we write~$f(a, b)$ as a short-hand for~$f((a, b))$.
The excess of the flow at~$v$ is~$f_\Sigma(v) := f_{\Sigma}^+(v) - f_{\Sigma}^-(v)$, that is, $s$ and~$t$ are the only vertices with non-zero excess.
The value of a flow is defined as $\val(f) = f_\Sigma(s) = -f_\Sigma(t)$.

Many extensions on the basic \textsc{Maximum Flow} have been studied, including costs on arcs, multiple sources and sinks, or lower bounds on the flow amount for the arcs (to name just a few).
Computing maximum flows respecting all these restrictions can be done in polynomial time~\cite{AMO93,Sch03}.
Computing a maximum flow that is even on \emph{all} arcs (while still respecting capacities) is still doable in polynomial time (and seems a common exercise given the online search results).
However, restricting a subset~$F \subsetneq A(D)$ of arcs to have either flow~$0$ or~$2$ is already sufficient make the problem NP-hard (see \cref{app:np_hardness_subset_even_flow}).

In contrast to \textsc{maximum flow}, the following \textsc{Matching} generalization is still polynomial-time solvable if restricted to even (or odd) numbers.

\problemDef{\acl{gf}}
{An undirected graph~$G$, a collection of sets~$\mathcal{B} = \{B_v \subseteq \{0, 1, \dots, \deg_G(v)\} \mid v \in V(G)\}$.}
{Is there a subset~$F \subseteq E(G)$, such that~$\deg_{(V(G),F)}(v) \in B_v$ for every vertex~$v \in V(G)$?}

We call a solution~$F$ a \emph{factor} of~$(G, \mathcal{B})$.
\acl{gf} with~$B_v = \{1\}$ for each vertex~$v$ is exactly \acl{pm}.
\citet{Cor88} showed that \acl{gf} is polynomial-time solvable if the maximum gap in any~$B_v$ is of length at most one.
Here, an integer subset~$B \subseteq \Z$ has a gap of length~$d \ge 1$ if there exists an integer~$p \in B$ with~$p+1,\ldots,p+d \notin B$ but~$p+d+1 \in B$.
For example, the set~$\{0,2,3,5\}$ has gaps of length one as certified by~$p = 0$ and~$p=3$.
We denote by \emph{max-gap} the maximum gap length over all sets~$B_v \in \mathcal{B}$.
Note that a max-gap of one is the boundary of polynomial-time solvability: 
There is a straightforward reduction from the NP-hard \textsc{Exact Cover by 3-Sets}~\cite[problem SP2]{GJ79} to \acl{gf} where every vertex~$v$ has allowed degrees either~$B_v = \{0,3\}$ (the vertex represents a set) or~$B_v=\{1\}$ (the vertex represents an element)~\cite{Cor88}.
Clearly, the constructed instance has max-gap two.

Recently, \citet{MSS21} provided an algorithm for \acl{gf} running in~${(M+1)^{3 \tw}}$ time on graphs of treewidth~$\tw$, where~$M$ is the largest integer appearing in any set~$B_v$.

\section{Warm Up: Integral Target Density}
\label{sec:integral}
Consider a \acl{tbded} instance~$\mathcal{I} = (G, k)$ with integral target density~$\tau$.
We transform~$\mathcal{I}$ into a \textsc{Maximum Flow} instance with~$O(m_G)$ vertices and arcs and small integral capacities.
As a maximum flow in such networks can be computed in randomized~$O(m^{1 + o(1)})$ time~\cite{CKLPGS23}, we get a near-linear time algorithm for \ac{tbded} if~$\tau \in \N$.

The flow shall correspond to a fractional orientation.
That is, one unit of flow should leave each edge~$\{u,v\}$; a part of it flowing to~$u$, the rest to~$v$.
Moreover, at each vertex~$v$ a total flow of at most~$\tau$ is allowed to arrive.
This idea easily translates to the following construction of the flow network~$D$ (see \cref{fig:bded_to_flow} for a visualization of the construction):%
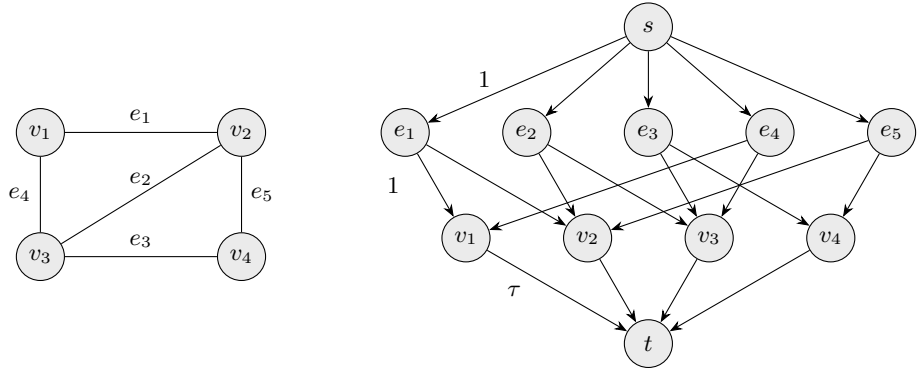
\begin{figure}
	\centering
	\begin{tikzpicture}[myarrow/.style={-Stealth},node distance=1cm and 1cm]
			\begin{scope}
				\begin{scope}[node distance=1cm and 2cm]
					\node[mycircle] (v1) {$v_1$};
					\node[mycircle, right=of v1] (v2) {$v_2$};
					\node[mycircle, below=of v1] (v3) {$v_3$};
					\node[mycircle, below=of v2] (v4) {$v_4$};
				\end{scope}

				\foreach \i/\j/\txt/\p in {%
					v1/v2/$e_1$/above,
					v2/v3/$e_2$/above,
					v3/v4/$e_3$/above,
					v1/v3/$e_4$/left,
					v2/v4/$e_5$/right}
					\draw [-] (\i) -- node[font=\small,\p] {\txt} (\j);
			\end{scope}

			\begin{scope}[xshift=8cm,yshift=1.4cm]
				\begin{scope}[yscale=1.4,xscale=1.6]
					\node[mycircle] (s) at (0,0) {$s$};
					\foreach \i in {1,...,5}
					{
						\node[mycircle] (e\i) at (-3 + \i, -1) {$e_\i$};
					}
					\foreach \i in {1,...,4}
					{
						\node[mycircle] (v\i) at (-2.5 + \i, -2) {$v_\i$};
					}
					\node[mycircle] (t) at (0,-3) {$t$};
				\end{scope}

				\foreach \i/\j in {s/e2, s/e3, s/e4, s/e5, e1/v2, e2/v2, e2/v3, e3/v3, e3/v4, e4/v1, e4/v3, e5/v2, e5/v4, v2/t, v3/t, v4/t}
					\draw [myarrow] (\i) -- (\j); 

				\foreach \i/\j/\txt/\p in {s/e1/$1$/left, e1/v1/$1$/left, v1/t/$\tau$/left}
					\draw [myarrow] (\i) -- node[font=\small,\p, xshift=-10pt] {\txt} (\j); 

			\end{scope}
		\end{tikzpicture}
	\caption{Example transformation of a \ac{tbded} instance with~$\tau \in\N$ (on the left) to a flow instance (on the right).
	All arcs on the same layer of the flow network have the same capacity as indicated at the leftmost arc.}
	\label{fig:bded_to_flow}
\end{figure}
\begin{itemize}
 \item Set the vertex set of~$D$ as $V(D) = V(G) \cup E(G) \cup \{s, t\}$. 
	This means for every vertex in~$G$ as well as every edge in~$G$, there is a vertex in~$D$.
 \item For every $e = \{u,v\} \in E(G)$ add an arc with capacity~1 to~$D$ going from source~$s$ to vertex~$e$.
 \item For every~$v \in V(G)$ add an arc with capacity~$\tau$ to~$D$ going from~$v$ to sink~$t$.
 \item For every edge~$e = \{u,v\} \in E(G)$ add two arcs with capacity~1 to~$D$, one going from vertex~$e$ to vertex~$u$ and one going from~$e$ to~$v$.
\end{itemize}
Consider a maximum $s$-$t$-flow~$f$ in the constructed network~$D$.
Since all capacities are integral, it follows that~$f$ is integral~\cite{AMO93,Sch03}.
It is not hard to see that we have~$\val(f) = m_G$ (that is, all arcs leaving~$s$ transport a flow of~$1$) if and only if there is a fractional orientation~$\phi$ in~$G$ with~$\Delta_{\orient}^- \le \tau$; the flow on the arcs in~$E(G) \times V(G)$ corresponds to~$\phi$.
However, a stronger statement holds: 
There is a solution~$F \subseteq E(G)$ for~$\mathcal{I}$ of size~$|F| = k$ if and only if~$f$ has value~$m_G - k$ .
The correspondence between~$f$ and~$F$ becomes clear at the arcs incident to~$s$: For an edge~$e \in E(G)$ we have~$f(s,e)=0$ if and only if~$e \in F$.
We defer the formal proof of this statement to the next section (see \cref{lem:bded_iff_flow}), where we provide a transformation to a more general flow-variant that covers also non-integral target densities.

Note that the flow network~$D$ has size~$O(m_G + n_G) = O(m_G)$ and can be constructed in linear time.
Moreover, the largest reasonable value for $\tau$ is~$\binom{n_G}{2}$.
Using fast existing maximum-flow algorithms for polynomially bounded integral capacities~\cite{CKLPGS23}, we obtain the following:

\begin{theorem}
	There is a randomized algorithm solving \acl{tbded} with integral target density~$\tau\in \N$ in time $O(m^{1 + o(1)})$.
\end{theorem}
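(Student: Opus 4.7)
The plan is to combine the flow construction outlined just above with the near-linear-time maximum-flow algorithm of \citet{CKLPGS23}. Given an instance $\mathcal{I}=(G,k)$ with $\tau \in \N$, the network $D$ has $n_G + m_G + 2 = O(m_G)$ vertices and $3m_G + n_G = O(m_G)$ arcs and is constructed in linear time; all capacities are integers bounded by $\max(1,\tau)$, and one may assume $\tau \le \binom{n_G}{2}$ since otherwise $F = \emptyset$ is trivially a solution. Hence the capacities are polynomially bounded in $m_G$.

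The crux is the correspondence---promised to appear as \cref{lem:bded_iff_flow}---that $(G,k)$ is a yes-instance of \ac{tbded} if and only if $D$ admits an $s$-$t$-flow of value at least $m_G - k$. For the forward direction, given a solution $F$, \cref{lem:density_orientation} applied to $G - F$ yields a fractional orientation $\orient$ with $\Delta_\orient^- \le \tau$; routing one unit of flow from $s$ through each $e \notin F$ and splitting it at $e$ in the proportions prescribed by $\orient$ produces a feasible flow of value $m_G - |F|$, since each arc $(v,t)$ carries $\deg_\orient^-(v) \le \tau$. Conversely, take an integer maximum flow $f$---guaranteed to exist because all capacities are integers---and set $F := \{e \in E(G) \mid f(s,e)=0\}$, so $|F| = m_G - \val(f)$. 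Integrality together with flow conservation at each edge-vertex forces the single unit entering $e \notin F$ to leave entirely on one of the two arcs $(e,u),(e,v)$, yielding an integral orientation $\orient$ of $G - F$ with $\deg_\orient^-(v) = f(v,t) \le \tau$, and \cref{lem:density_bound} then gives $\rho^*(G - F) \le \tau$.

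With the correspondence in hand, the theorem reduces to running the randomized algorithm of \citet{CKLPGS23} on $D$ and reading off $F$ from the saturated arcs out of $s$. The main obstacle I expect is not combinatorial but rather bookkeeping around the subroutine: one must confirm that the algorithm returns an \emph{integer} optimum on networks with integer capacities (standard, but essential here to extract an orientation via the converse argument) and that its $m^{1+o(1)}$ bound still applies because $\tau$, and hence all capacities, are polynomially bounded in $m_G$. Both conditions are immediate from the assumptions above, and substituting $|A(D)| = O(m_G)$ yields the claimed $O(m^{1+o(1)})$ running time.
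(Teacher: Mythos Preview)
Your proposal is correct and follows essentially the same approach as the paper: build the flow network~$D$ described just before the theorem, invoke the correspondence between edge-deletion solutions and flows of value at least~$m_G-k$ (which the paper indeed defers to \cref{lem:bded_iff_flow}), and apply the near-linear-time max-flow algorithm of \citet{CKLPGS23} using that the network has $O(m_G)$ size and polynomially bounded integer capacities. Your justification of both directions of the correspondence matches the paper's reasoning, and your observation that an integral maximum flow yields an integral (rather than merely fractional) orientation is also noted in the paper as a corollary.
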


A further consequence of the flow-construction is that if $\tau \in \N$, then there exists a fractional orientation that is in fact integral.
That is, each edge is assigned to either of its two endpoints.
This was previously observed by \citet{CCK25}.

Note that for $\tau = 1/2$ the capacities in~$D$ are no longer integral. 
Therefore, an integral maximum flow is not guaranteed.
However, \ac{tbded} for $\tau = 1/2$ corresponds to \textsc{Maximum Cardinality Matching}.
Hence, for our particular network~$D$ there is a polynomial-time algorithm that computes among all flows that are integral on the arcs leaving~$s$ a maximum one.
Note that a linear-time reduction from \textsc{Maximum Cardinality Matching} to \textsc{Flow} is not known (so far).
Thus, a polynomial-time algorithm for \ac{tbded} with half-integral capacities cannot rely on such a simple reduction as provided above.
We present the solution to this issue in the next section.

\section{Algorithmic Results}
\label{sec:flow}
In this section, we provide efficient algorithms for \ac{tbded} with certain target densities.
More precisely, we provide polynomial-time algorithms for integral and half-integral target densities, while for arbitrary constant target densities, we prove fixed-parameter tractability with respect to the treewidth of the input graph.
Our approach builds on the overall idea of \cref{sec:integral}, that is, we reduce \ac{tbded} to some kind of flow problem.
In particular, we introduce the \acf{rflow} problem, which is, informally spoken, a flow problem on graphs with multiple source and sink vertices, where every edge goes directly from a source to a sink and every vertex has a set of allowed excess values.
Afterwards, we translate this problem to \acl{gf}.
Applying known results for \acl{gf}, we finally deduce algorithmic results for \ac{rflow}, which, on the one hand, are of interest on their own and, on the other hand, can then be translated to \ac{tbded}.

\subsection{Transshipment Formulation}
\label{ssec:transshipment}
To describe \ac{tbded} more generally, we switch to transshipment problems, that is, a flow-variant where we have multiple sources and sinks with specified amounts of flow that are provided resp.\ consumed by the sources and sinks~\cite{Sch03}.
Consider a digraph~$D$.
For a function~$b\colon V(D) \rightarrow \Q$, a $b$-transshipment is a flow~$f$ with~$f_\Sigma(v) = b(v)$ for all vertices~$v \in V(D)$.
Note that $\sum_{v\in V}b(v) = 0$ is necessary for a $b$-transshipment to exist.
Adapting notation from \acl{gf}, we specify a set collection~$\mathcal B = \{B_v \subseteq \Q\mid v\in V(D)\}$ of allowed excess values for each vertex.
A \emph{$\mathcal B$-transshipment} is then a (flow) function~$f\colon A(D) \to \Qnn$ with~$f_\Sigma(v) \in B_v$ for all~$v \in V(D)$.
Note that if also arc capacities~$\capacity\colon A(D) \to \Qnn$ are provided, then we have the usual additional requirement~$f(a) \le \capacity(a)$ for each arc~$a \in A(D)$.
We call an instance~$(D, \capacity, \mathcal B)$ \emph{integral} if~$\capacity\colon A(D) \to \N$ and~$B_v \subseteq \Z$ for all~$v \in V(D)$.
We call a digraph~$D$ \emph{2-layered}, if the underlying undirected graph is bipartite with parts~$U$ and~$W$ and all arcs go from~$U$ to~$W$, that is, $A(D) \subseteq U \times W$.
For a 2-layered network~$D$ with vertex partitions~$U$ and~$W$ and flow~$f$, we define the \emph{value}~$\val(f) = \sum_{u \in U} f_{\Sigma}(u) = \sum_{w \in W} -f_{\Sigma}(w)$.
Note that in a 2-layered transshipment instance, we can assume that the vertices in~$U$ are sources (non-negative excess) and the vertices in~$W$ are sinks (non-positive excess).
With these notions, we can define our transshipment problem:

\OptproblemDef{\acf{rflow}}
{A 2-layered network~$D$ with parts~$U$ and~$W$, arc capacities~$\capacity\colon A(D) \rightarrow \Qnn$, and a collection $\mathcal{B} = \{B_v \subseteq \Qnn \mid v\in U\}\cup\{B_v \subseteq \Qnp \mid v\in W\}$ of sets.}
{Find a $\mathcal B$-transshipment~$f$ in~$D$ which maximizes $\val(f)$.}

For the rest of this work we assume the underlying undirected graph of~$D$ to be connected and therefore have~$n_D - 1 \leq m_D$ and~$O(m_D + n_D) = O(m_D)$.
Clearly if this is not the case, we obtain a solution by solving each connected components individually and isolated vertices can be ignored.
To familiarize ourselves with the problem, let us model an instance~$\mathcal{I}= (G, k)$ of \ac{tbded} as a \ac{rflow}-instance; we denote this linear-time transformation by~$T(\mathcal{I})$:
\begin{itemize}
	\item The vertex set of the network~$D$ is~$V(D) \coloneqq V(G) \cup E(G)$.
	\item For each edge~$e = \{u,v\} \in E(G)$, add the arcs~$(e,u)$ and~$(e,v)$ with capacity~1 to~$D$.
	\item For each~$e \in E(G)$, set~$B_e \coloneqq \{0,1\}$, that is, a flow of either 0 or 1 is allowed to leave~$e$.
	\item For each~$v \in V(G)$, set~$B_v \coloneqq [-\tau,0]$, that is, any flow up to~$\tau$ is allowed to arrive in~$v$.
\end{itemize}

Note that in the above construction a vertex~$v \in V(G)$ is assigned an infinitely large set~$[-\tau,0]$ of allowed excess values. 
We henceforth assume that a reasonable encoding for the sets~$B_v$ is used (here, e.\,g.\ via the numbers~$-\tau$ and~$0$ encoding the interval).

It is straightforward to verify the correctness of this reduction:

\begin{lemma}
	\label{lem:bded_iff_flow}
	Let~$\mathcal{I}= (G, k)$ be a \ac{tbded} instance and let~$T(\mathcal{I}) = (D,\capacity, \mathcal{B})$ be the constructed \ac{rflow} instance.
	Then there is a $\mathcal B$-transshipment~$f$ in~$D$ with value~$\val(f) \geq m_G - k$ if and only if there is an edge set~$F \subseteq E$, $|F| \leq k$, with~$\rho^*(G-F) \le \tau$.
\end{lemma}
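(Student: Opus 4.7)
The plan is to establish both directions via an explicit correspondence between fractional orientations of the relevant subgraph of $G$ and $\mathcal{B}$-transshipments in $D$, exploiting the fact that every edge-vertex $e \in E(G)$ in $D$ has only outgoing arcs of capacity one, and is constrained by $B_e = \{0,1\}$ to have its outflow be either $0$ or $1$.

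For the forward direction, I would start from a set $F \subseteq E(G)$ with $|F| \leq k$ and $\rho^*(G-F) \leq \tau$, apply \cref{lem:density_orientation} to obtain a fractional orientation $\orient$ of $G-F$ with $\Delta_\orient^- \leq \tau$, and define $f(e,u) \coloneqq \orient(uv)^u$ and $f(e,v) \coloneqq \orient(uv)^v$ for each $e=\{u,v\} \in E(G) \setminus F$, while setting $f \equiv 0$ on the arcs incident to edge-vertices of $F$. A routine check will verify $f_\Sigma(e) \in \{0,1\} = B_e$ on edge-vertices and $f_\Sigma(v) = -\deg_\orient^-(v) \in [-\tau, 0] = B_v$ on original vertices, and the value will be $\val(f) = m_G - |F| \geq m_G - k$ since each edge in $E(G)\setminus F$ contributes one unit of excess in the source layer $U = E(G)$.

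For the converse, given a $\mathcal{B}$-transshipment $f$ with $\val(f) \geq m_G - k$, I would define $F \coloneqq \{e \in E(G) : f_\Sigma(e) = 0\}$, so that $\val(f) = \sum_{e \in E(G)} f_\Sigma(e) = m_G - |F|$, yielding $|F| \leq k$. The key observation is that for $e \in F$ the constraint $f_\Sigma(e) = 0$ together with non-negativity and capacity one on the two outgoing arcs forces both arcs to carry zero flow. Consequently, setting $\orient(uv)^u \coloneqq f(e,u)$ and $\orient(uv)^v \coloneqq f(e,v)$ for each $e=\{u,v\} \in E(G) \setminus F$ gives a well-defined fractional orientation of $G-F$ (values non-negative and summing to $1$), and $\deg_\orient^-(v) = -f_\Sigma(v) \leq \tau$ for each $v \in V(G)$. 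Applying \cref{lem:density_bound} then yields $\rho^*(G-F) \leq \Delta_\orient^- \leq \tau$.

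The only nontrivial point — really the only thing that is not a bookkeeping exercise — is the observation that edges in $F$ carry no flow at all on either outgoing arc, so that the induced orientation on $G-F$ is unaffected by the choice of $f$ on ``deleted'' edges; everything else is pushing the definitions through. I do not anticipate any real obstacle, since the capacity-$1$ arcs and the set $B_e = \{0,1\}$ were designed precisely to make an integral/fractional orientation of $G-F$ and a feasible transshipment interchangeable representations of the same object.
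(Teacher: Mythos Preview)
Your proposal is correct and follows essentially the same approach as the paper: both directions use the correspondence between a fractional orientation of $G-F$ (via \cref{lem:density_orientation} and \cref{lem:density_bound}) and a $\mathcal B$-transshipment, with $F$ defined as the set of edge-vertices with zero excess. Your explicit remark that $f_\Sigma(e)=0$ forces both outgoing arcs of $e$ to carry zero flow is a small refinement the paper leaves implicit (the paper only needs $\deg_{\orient}^-(v)\le -f_\Sigma(v)$, which holds regardless), but otherwise the arguments are identical.
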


\begin{proof}
	``$\Rightarrow$'': Suppose that~$f$ is a $\mathcal B$-transshipment for~$(D, \capacity, \mathcal{B})$ with~$\val (f) \geq m_G -k$.
	We define~$F \subseteq E(G)$ to be the set of edges where no flow is sent from the corresponding vertices in~$D$, that is, $F:= \{e \in E(G) \mid f_{\Sigma}(e) = 0 \}$.
	Since~$f_\Sigma(e)\in B_e = \{0,1\}$ for all~$e\in E(G)$ and~$\val(f) \geq m_G - k$, it follows that~$|F| \leq k$.

	Using the flow~$f$, we now construct a fractional orientation~$\orient$ for~$G - F$ with~$\Delta_{\orient}^- \leq \tau$.
	\Cref{lem:density_bound} then yields the claim.
	For any edge~$e = \{u,v\} \in E(G) \setminus F$, consider the arcs~$(e, u)$ and~$(e, v)$ in~$A(D)$.
	Set~$\orient(e)^u := f(e, u)$ and~$\orient(e)^v := f(e, v)$.
	It holds that~$\orient(e)^u + \orient(e)^v = 1$, since $f_{\Sigma}(e) = 1$.
	For any vertex~$v \in V(G)$, it holds that~$\deg_{\orient}^-(v) \leq \tau$, because~$f_{\Sigma}(v) \geq -\tau$ by the excess constraint~$B_v = [-\tau, 0]$.

	``$\Leftarrow$'': Now, let~$F \subseteq E(G)$, $|F| \leq k$, with~$\rho^*(G-F) \leq \tau$. By~\Cref{lem:density_orientation}, there exists a fractional orientation~$\orient$ on $G-F$ with~$\Delta_{\orient}^- \leq \tau$.
	We construct a $\mathcal B$-transshipment~$f$ with~$\val(f)\ge m_G - k$ from~$\orient$:
	For each~$e = \{u,v\} \in E(G) \setminus F$, set the flows~$f(e, u) \coloneqq \orient(e)^u$ and~$f(e, v) \coloneqq \orient(e)^v$, and for each~$e = \{u,v\} \in F$, set $f(e, u) = f(e, v) \coloneqq 0$.
	Consider the vertices in~$V(D)$ representing the edges of~$E(G)$.
	The above assignment ensures for any vertex~$e \in E(G) \setminus F$, that~$f_{\Sigma}(e) = 1$ and for any vertex~$e \in F$ that~$f_{\Sigma}(e) = 0$.
	Moreover, $f_{\Sigma}(v) \in [-\tau, 0]$ is guaranteed since~$\Delta_{\orient}^- \leq \tau$.
	Finally, we have~$\val(f) = \sum_{e \in E\setminus F} 1 = m_G - |F| \geq m_G - k$.
\end{proof}

\subsection{Solving \acl{rflow} via General Factors}
\label{ssec:general_factors}
We now consider how to solve \ac{rflow} using \acl{gf}.
One important step will be to limit the excess constraints of vertices in a \ac{rflow} instance to integral values.
For this, we first consider conditions under which \ac{rflow} yields integral solutions, meaning solutions in which the flow assigned to any edge is integral.

\begin{lemma}
	\label{lem:integral_flow_bgt}
	Let~$(D, \capacity\colon A(D)\to \N,\mathcal{B})$ be a \ac{rflow} instance with~$A(D) \subseteq U \times W$.
	Furthermore, for each~$v \in V(D)$, let~$B_v \subseteq \Z$ or~$B_v = [c_v, d_v]$ for some~${c_v, d_v \in \Z}$ with~$c_v \leq d_v$.
	If~$(D,\capacity, \mathcal{B})$ has a~$\mathcal B$-transshipment, then it attains an integral maximum $\mathcal B$-transshipment.
\end{lemma}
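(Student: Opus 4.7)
The plan is to cast the problem as a network-flow LP and invoke the classical integrality result via total unimodularity. The only twist is that excess constraints of the form $B_v \subseteq \Z$ are non-convex and cannot be encoded linearly; I would circumvent this by first fixing the discrete excesses to the integer values taken by an arbitrary maximum $\mathcal{B}$-transshipment and then solving a single LP on the remaining variables.

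First I would verify that a maximum $\mathcal{B}$-transshipment $f^*$ exists at all. The set of feasible $\mathcal{B}$-transshipments is bounded by the arc capacities, and each constraint $f_\Sigma(v) \in B_v$ is closed in $\R$: closed intervals are closed, and any $B_v \subseteq \Z$ is closed because $\Z$ itself is closed in $\R$. Hence the feasible set is compact, non-empty by hypothesis, and the continuous function $\val(\cdot)$ attains its maximum.

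Next, for each $v$ with $B_v \subseteq \Z$ set $b_v := f^*_\Sigma(v)$, which is integer by the assumption on $B_v$, and consider the linear program
\begin{align*}
\max\ \val(f) \quad \text{s.t. } & 0 \le f(a) \le \capacity(a) \text{ for each } a \in A(D), \\
& f_\Sigma(v) = b_v \text{ for each } v \in V(D) \text{ with } B_v \subseteq \Z, \\
& c_v \le f_\Sigma(v) \le d_v \text{ for each } v \in V(D) \text{ with } B_v = [c_v, d_v].
\end{align*}
Its constraint matrix consists of (signed) rows of the node-arc incidence matrix of $D$ together with $\pm$identity rows for the arc bounds; since the incidence matrix of a digraph is totally unimodular and appending identity rows preserves this property (expand any square submatrix along its identity rows), the whole matrix is TU. All right-hand sides are integers by construction, so by the Hoffman--Kruskal theorem every basic optimal solution of this LP is integral.

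Since $f^*$ is feasible for this LP with objective value $\val(f^*)$, the LP optimum is at least $\val(f^*)$, and an integral optimum $f'$ is therefore an integral $\mathcal{B}$-transshipment (it satisfies $b_v \in B_v$ for the discrete vertices and $f'_\Sigma(v) \in [c_v,d_v] = B_v$ for the interval vertices) with $\val(f') \ge \val(f^*)$; maximality of $f^*$ over all $\mathcal{B}$-transshipments then forces equality, giving the desired integral maximum. The one genuinely delicate step is the preliminary extraction of the integer values $b_v$ from the possibly fractional maximizer $f^*$: without this, the LP relaxation cannot express the discrete, gap-containing sets $B_v \subseteq \Z$, and the TU argument would break down — this is precisely why the hypothesis insists that such $B_v$ consist of integers.
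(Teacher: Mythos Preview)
Your proof is correct and follows essentially the same strategy as the paper: take a maximum $\mathcal B$-transshipment, freeze the discrete excesses at the integer values it realizes, and then invoke network-flow integrality on the resulting problem with purely interval (integer-endpoint) data. The only cosmetic difference is that the paper encodes this last step as an $s$-$t$ flow with integer lower/upper arc bounds and cites the standard integrality corollary from Schrijver, whereas you stay with the LP and appeal directly to total unimodularity of the incidence matrix via Hoffman--Kruskal; these are two presentations of the same fact.
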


\begin{proof}
	The idea is to construct a flow instance with lower and upper bounds for the flow on each arc, then use a known integrality property and finally transfer the flow back to our \ac{rflow} instance.
	Let~$f$ be a maximum~$\mathcal B$-transshipment for~$(D, \capacity, \mathcal{B})$ (such~$f$ must exist if there is a feasible transshipment) and consider the flow instance~$\mathcal{I}'$, where we take the network~$D$ together with a source vertex~$s$, a sink vertex~$t$ and the following arcs with lower and upper bounds (denoted by intervals) on the flow assigned to them:
	\begin{itemize}
	 \item For each arc~$a \in A(D)$, we set the permitted flow to~$[0,\capacity (a)]$.
	 \item For every~$u \in U$ with~$f_{\Sigma} (u) > 0$, we add the arc~$(s,u)$ and distinguish between two cases. If~$B_u = [c_u,d_u]$, then we set the permitted flow to~$[c_u, d_u]$. Otherwise, we set the permitted flow to~$[f_{\Sigma}(u), f_{\Sigma}(u)]$.
	 \item For every~$w \in W$ with~$f_{\Sigma} (w) < 0$, we add the arc~$(w,t)$ and distinguish between two cases. If~$B_w = [c_w,d_w]$, then we set the permitted flow to~$[-d_w, -c_w]$ and otherwise to~$[-f_{\Sigma}(w), -f_{\Sigma}(w)]$.
	\end{itemize}
	We now show that there is a feasible flow~$f'$ for the constructed instance, which then implies the existence of an integral maximum flow for~$\mathcal{I}'$ (see \cite[Corollary 11.2e]{Sch03}).
	Let~$a$ be some arc in our constructed instance. We define~$f'$ by
	\begin{equation*}
	f'(a) \coloneqq
		\begin{cases}
			f_{\Sigma}(u) & \text{if } a = (s,u) \text{ for some } u \in U \\
			f (a) & \text{if } a \in A(D) \\
			- f_{\Sigma}(w) & \text{if } a = (w,t) \text{ for some } w \in W.
		\end{cases}
	\end{equation*}
	Note that~$\val (f) = \val(f')$ since both flows send the same amount of flow from~$U$ to~$W$.
	The feasibility of~$f'$ follows directly from our construction and the fact that~$f$ is a $\mathcal{B}$-transshipment.
	Now, let~$f''$ be an integral maximum flow for~$\mathcal{I}'$.
	We define an integral $\mathcal B$-transshipment~$f^*$ by~$f^* (u,w) := f''(u,w)$ for all~$(u,w) \in A(D)$.

	First, we show that~$f^*$ is indeed a~$\mathcal B$-transshipment.
	For this, note that for each~$u \in U$, we have~$f^*_{\Sigma}(u) = f''(s,u)$.
	We distinguish between two cases.
	If~$B_u \subseteq \Z$, then the permitted flow on~$(s,u)$ is exactly~$f_{\Sigma}(u)$ and therefore~$f''(s,u) = f_{\Sigma}(u) \in B_u$.
	Otherwise, the permitted flow on~$(s,u)$ is~$[c_u,d_u]$ which directly implies~$f''(s,u) \in [c_u, d_u] = B_u$.
	For all vertices~$w \in W$, we have~$f^*_{\Sigma}(w) = - f''(w,t)$ and by analogous arguments as above, it follows~$-f''(w,t) \in B_w$.

	It remains to show that~$f^*$ is indeed a maximum flow.
	Recall that~$f$ is a maximum flow for~$(D, \capacity, \mathcal{B})$ and~$f'$ with~$\val(f') = \val(f)$ is a feasible solution for~$\mathcal I'$.
	Further,~$f''$ is a maximum integral flow for~$\mathcal I'$, that is, $\val(f') \leq \val(f'')$.
	Finally, since $\val(f^*) = \val(f'')$, we have~$\val (f) = \val (f') \leq \val (f'') = \val (f^*)$.
\end{proof}

Using the above lemma we now observe that any \ac{rflow} instance with an integral maximum transshipment can be restricted to integral capacities and integral excess values.

\begin{observation}
	\label{obs:integral_excess_constraints_bgt}
	If a \ac{rflow} instance~$(D, \capacity, \mathcal{B})$ has an integral maximum~$\mathcal B$-transshipment, then it is equivalent to~$(D, \capacity', \mathcal{B}')$ where~$\mathcal B' \coloneqq \{B_v \cap \Z \mid B_v \in \mathcal{B}\}$ and~$\capacity '(a) \coloneqq \lfloor \capacity (a) \rfloor$ for each arc~$a \in A(D)$.
\end{observation}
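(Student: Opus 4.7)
The plan is to establish the equivalence by arguing that both instances attain the same maximum value, with feasible solutions carrying over cleanly in both directions. The essential point is that the two modifications, rounding capacities down and intersecting excess sets with $\Z$, are invisible to integral flows, so the existence of an integral optimum in the original instance is precisely what is needed for the reformulation to be lossless.

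For the first direction, I would take an integral maximum $\mathcal B$-transshipment $f$ in $(D, \capacity, \mathcal B)$, which exists by hypothesis. Since $f(a) \in \N$ and $f(a) \le \capacity(a)$, we immediately get $f(a) \le \lfloor \capacity(a) \rfloor = \capacity'(a)$. Likewise, $f_\Sigma(v) \in \Z$ and $f_\Sigma(v) \in B_v$ together give $f_\Sigma(v) \in B_v \cap \Z = B_v'$. Hence $f$ is a feasible $\mathcal B'$-transshipment in $(D, \capacity', \mathcal B')$ of the same value, showing that the optimum of the modified instance is at least that of the original.

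For the converse direction, let $f'$ be any $\mathcal B'$-transshipment in $(D, \capacity', \mathcal B')$, where integrality is not assumed. Then $f'(a) \le \capacity'(a) \le \capacity(a)$ for every arc $a$, and $f'_\Sigma(v) \in B_v' \subseteq B_v$ for every vertex $v$. Thus $f'$ is also a feasible $\mathcal B$-transshipment in the original instance, and so the optimum of $(D, \capacity, \mathcal B)$ is at least that of $(D, \capacity', \mathcal B')$.

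Combining both inequalities shows the two optima coincide, and optimal transshipments in either instance are optimal in the other, justifying the claimed equivalence. There is no real obstacle here; the only thing to be careful about is invoking the hypothesis in the right place, namely that an \emph{integral} maximum transshipment exists in the original instance, which is exactly what lets the first direction go through.
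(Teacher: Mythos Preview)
Your argument is correct and is exactly the natural justification. The paper in fact states this as an observation without proof, so there is no ``paper's own proof'' to compare against; your two-direction argument (an integral optimum survives rounding down capacities and intersecting excess sets with~$\Z$, while conversely every $\mathcal B'$-transshipment is trivially a $\mathcal B$-transshipment) is precisely what the authors expect the reader to supply.
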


With this observation we can now describe a reduction of integral \ac{rflow} instances to \acl{gf}.
This will later allow us to use known algorithms for~\acl{gf} to solve~\ac{rflow} instances.
\begin{lemma}
	\label{lem:bgt_to_gf_reduction}
	Let~$(D, \capacity\colon A(D)\to \N,\mathcal{B} = \{B_v \subseteq \Z \mid v\in V(D)\})$ be an integral~\ac{rflow} instance, where
	all capacities are bounded polynomially in the size of~$D$.
	Then, one can construct in polynomial time, a \acl{gf} instance~$(G, \mathcal{B}')$ with the following properties:
			\begin{enumerate}[(i)]
				\item $(D,\capacity, \mathcal{B})$ has an integral $\mathcal B$-transshipment of value~$p$ if and only if~$(G, \mathcal{B}')$ has a factor of size~$2p$. \label{prop:correctness}
				\item Given a tree decomposition for the underlying undirected graph of~$D$ of width~$\tw$, we can compute a tree decomposition of width~$\max\{2,\tw\}$ for~$G$ in polynomial time.
				\label{prop:treewidth}
				\item The max-gap of~$\mathcal{B}'$ is~$\max\{1,\gamma\}$, where~$\gamma$ is the max-gap of~$\mathcal{B}$. \label{prop:maxgap}
			\end{enumerate}
\end{lemma}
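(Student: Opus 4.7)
The plan is to build~$G$ by replacing every arc $a = (u,w) \in A(D)$ with $\capacity(a)$ two-edge parallel paths between~$u$ and~$w$, where selecting such a path into the factor represents one unit of flow on~$a$. Concretely, I would keep all vertices of~$V(D)$, introduce fresh \emph{unit} vertices $x^a_1, \ldots, x^a_{\capacity(a)}$ for every arc $a \in A(D)$, and connect each $x^a_i$ by a single edge to both~$u$ and~$w$. The set collection~$\mathcal{B}'$ would be defined by $B'_u := B_u$ for $u \in U$, by $B'_w := \{-b \mid b \in B_w\}$ for $w \in W$ (flipping signs, since excesses in~$W$ are non-positive while factor degrees are non-negative), and by $B'_{x^a_i} := \{0,2\}$ for every unit vertex. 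The polynomial bound on the capacities guarantees that this construction runs in polynomial time.

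For property~(i), I would establish the natural bijection: given an integral $\mathcal{B}$-transshipment~$f$, mark any~$f(a)$ of the unit vertices on each arc~$a$ as active and put both their incident edges into~$F$; inactive units retain degree~$0$ while active ones receive degree~$2$. Each $u \in U$ then has degree $\sum_{w \in N^+(u)} f(u,w) = f_\Sigma(u) \in B'_u$ in~$F$, each $w \in W$ has degree $-f_\Sigma(w) \in B'_w$, and $|F| = 2\val(f)$. The converse direction is symmetric: given a factor~$F$, define $f(a)$ to be the number of degree-$2$ units on~$a$, and observe that the $\{0,2\}$-constraints at the unit vertices ensure $f$ is well-defined, while the degree constraints at the $U$- and $W$-vertices translate directly into the required excess conditions.

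For property~(iii), the only sets that could introduce a new gap beyond those already present in~$\mathcal{B}$ are the unit sets $\{0,2\}$, which contribute a gap of length exactly one; since $B'_u = B_u$ and $B'_w$ is a reflection of~$B_w$, the max-gap of $\mathcal{B}'$ is $\max\{1, \gamma\}$. For property~(ii), given a tree decomposition of the underlying graph of~$D$ of width~$\tw$, I would, for every arc $a = (u,w)$, pick a bag containing both~$u$ and~$w$ (one exists because $\{u,w\}$ is an underlying edge) and attach to it $\capacity(a)$ new leaf bags of the form $\{u, w, x^a_i\}$. Each unit vertex appears in exactly one bag and all of its incident edges are covered there, so the result is a valid tree decomposition of~$G$ whose width is $\max\{2, \tw\}$. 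I do not foresee a serious obstacle beyond careful bookkeeping; the main subtleties are the sign-flip in the definition of~$B'_w$ and verifying the treewidth bound in the small cases $\tw \in \{0,1\}$, where the size-three leaf bags dominate.
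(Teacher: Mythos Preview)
Your proposal is correct and essentially identical to the paper's proof: the paper also subdivides each arc~$(u,w)$ into $\capacity(u,w)$ length-two paths with $\{0,2\}$-constrained midpoints, defines the degree sets at the original vertices via $\abs(B_v)$ (which coincides with your case split since $B_u\subseteq\N$ for $u\in U$ and $B_w\subseteq\Z_{\le 0}$ for $w\in W$), and attaches size-three leaf bags to obtain the tree decomposition. The bijection in~(i) and the gap argument in~(iii) match as well.
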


\begin{proof}
	Let~$(D,\capacity, \mathcal{B} = \{B_v \subseteq \Z \mid v\in V(D)\})$ be a given~\ac{rflow} instance with~${A(D) \subseteq U \times W}$.
	Now, consider the multigraph, which is obtained by replacing every arc~$(u,w)$ with~$\capacity (u,w)$ many undirected parallel edges between~$u$ and~$w$.
	Moreover, set~${\mathcal{B}' \coloneqq \{ \abs(B_v) \mid B_v \in \mathcal{B} \}}$, where~$\abs(B_v) \coloneqq \{|x| \mid x \in B_v\}$.
	We can transform the multigraph into a simple graph by subdividing every edge and adding the degree constraint~$\{0,2\}$ to~$\mathcal{B}'$ for every newly created vertex.
	We refer to the set of new vertices by~$H$ and if necessary, we call the vertex, which was introduced for the $i^{\text{th}}$ parallel edge between~$u$ and~$w$, by~$h^i_{uw}$.
	We call the obtained graph~$G$ and the constructed~\acl{gf} instance is~$(G, \mathcal{B}')$.
	We next show property~(\ref{prop:correctness}).
	
	``$\Rightarrow$'': Let~$f$ be an integral~$\mathcal B$-transshipment for~$(D, \capacity, \mathcal{B})$.
	We show that there is a general factor~$F \subseteq E(G)$ for the instance~$(G, \mathcal{B}')$ of size~$|F| = 2 \val (f)$.
	Intuitively, if the flow value on the arc~$(u,w)$ is~$f(u,w)$, then we add~$2f(u,w)$ edges from~$G$ to~$F$, such that they correspond to~$f(u,w)$ copies of the edge~$\{u,w\}$.
	Formally, we set
	\[F := \bigcup_{(uw) \in A(D)} \bigcup_{i = 1}^{f(u,w)} \left\{ \{u,h^i_{uw}\}, \{h^i_{uw},w\}\right\}.\]
	Since every flow from~$U$ to~$W$ counts towards~$\val (f)$, it directly follows that~$|F| = 2 \val (f)$.
	It remains to show that~$F$ satisfies the degree constraints~$\mathcal{B}'$, that is, $\deg_{(V(G), F)} (v) \in B_v$ for every~$v \in V(G)$.
	Recall, that for~$h \in H$, we have~$B_h = \{0,2\}$.
	By construction, we either add zero or two edges incident to~$h$ to~$F$ and therefore we have~$\deg_{(V(G), F)} (h) \in B_h$.
	For~$v \in U \cup W$, note that the degree in~$(V(G),F)$ coincides exactly with the absolute value of~$f_{\Sigma}(v)$.
	This together with~$f_{\Sigma} (v) \in B_v$ implies that~$\deg_{(V(G),F)} (v) \in \abs (B_v) \in \mathcal{B}'$.

	``$\Leftarrow$'': Now suppose~$F \subseteq E(G)$ is a $\mathcal B'$-factor.
	We show that there exists a $\mathcal B$-transshipment~$f$ with $\val (f) = \frac{|F|}{2}$.
	Let~$(u,w)$ be an arc in~$D$.
	By construction, if~$\{u,h^i_{uw}\} \in F$ for some~${h^i_{uw} \in H}$, then also~$\{h^i_{uw},w\} \in F$, because~$B_{h^i_{uw}} = \{0,2\}$.
	Now, we define a flow~$f$ by setting~$f(u,w)$ to the number of vertices from~$H$ to which both~$u$ and~$w$ are connected in~$(V(G),F)$.
	This is an integral feasible flow since for every arc~$(u,w) \in A(D)$ both~$u$ and~$w$ have by construction at most~$\capacity (u,w)$ common neighbors in $G$.
	Moreover, for each~$u\in U$, we have~${f_\Sigma(u) = \deg_{(V(G),F)}(u)\in \abs(B_u)=B_u}$ and for each~$w\in W$, we have~${f_\Sigma(w)= -\deg_{(V(G),F)}(w)\in \{-x\mid x\in \abs(B_w)\}=B_w}$.
	Since exactly~$\frac{|F|}{2}$ edges are between~$U$ and~$H$ and also between~$H$ and~$W$, it follows~$\val(f)=\frac{|F|}{2}$.

	It remains to show the two properties (\ref{prop:treewidth}) and (\ref{prop:maxgap}).

	(\ref{prop:treewidth}): Let~$(\mathcal T, (X_i)_{i \in V(\mathcal T)})$ be a tree-decomposition of width~$\tw$ for the underlying undirected graph of~$D$.
	For each arc~$(u,w) \in A(D)$, there exists a bag~$X_i$ with~$\{u,w\}\subseteq X_i$.
	We obtain a tree decomposition for~$G$ of width~$\max\{2,\tw\}$ by appending one bag~$\{u,w,h^j_{uw}\}$ to~$X_i$ for each~$j \in [\capacity (u,w)]$.

	(\ref{prop:maxgap}): Note that, for any vertex~$h \in H$, the set~$B_h$ has a gap of exactly one.
	For~$v \in U \cup W$, the collection~$\mathcal{B}'$ contains the set~$\abs (B_v)$ which has the same gap as~$B_v$ since either~$B_v \subseteq \Qnn$ or~$B_v \subseteq \Qnp$.
\end{proof}

To sum up, we have seen that a \ac{rflow} instance allowing flows in an interval with integer endpoints can be restricted to an integral instance and that integral \ac{rflow} instances can be translated into equivalent \acl{gf} instances.
For non-integral \ac{rflow} instances, the reduction to \acl{gf} does not work.
A common trick we will use later, is to scale capacities and excess constraints by some factor in order to make them integral.
This scaling obviously influences the overall flow value.

\begin{lemma}
	\label{lem:scaling_bgt}
	Let $(D, \capacity, \mathcal{B})$ be a \ac{rflow}~instance, $q \in \Qnn$ and~$(D, \capacity ', \mathcal{B}')$ be the instance, which is obtained by multiplying all excess constraints and capacities from~$(D, \capacity, \mathcal{B})$ with~$q$.
	Then,~$(D, \capacity, \mathcal{B})$ has a maximum~$\mathcal B$-transshipment value of~$p$ if and only if~$(D, \capacity ', \mathcal{B}')$ has a maximum $\mathcal B'$-transshipment value of~$pq$.
\end{lemma}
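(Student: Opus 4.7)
The plan is to exhibit a value-scaling bijection between the feasible $\mathcal B$-transshipments in $(D, \capacity, \mathcal B)$ and the feasible $\mathcal B'$-transshipments in $(D, \capacity', \mathcal B')$, and then read off the correspondence of the maxima. For $q > 0$ I would define $\Phi(f) := q \cdot f$ acting arc-wise on flows, with inverse $\Phi^{-1}(f') := (1/q) \cdot f'$. The degenerate case $q = 0$ is immediate: every arc capacity becomes $0$ and every excess set becomes $\{0\}$ (or $\emptyset$), so both instances admit only the zero flow of value $0 = p \cdot 0$, or are simultaneously infeasible.

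For $q > 0$, I would verify that $\Phi$ and $\Phi^{-1}$ preserve feasibility and scale the objective by $q$ via three routine linearity checks. First, capacities: $f(a) \le \capacity(a)$ iff $q f(a) \le q \capacity(a) = \capacity'(a)$. Second, excesses: the excess operator $v \mapsto f_\Sigma(v) = f_\Sigma^+(v) - f_\Sigma^-(v)$ is linear in $f$, so $(qf)_\Sigma(v) = q \cdot f_\Sigma(v)$; because the scaled collection satisfies $B'_v = \{qx \mid x \in B_v\}$ by the definition of multiplying all excess constraints by $q$, we get $f_\Sigma(v) \in B_v$ iff $(qf)_\Sigma(v) \in B'_v$. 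Third, the objective value: by definition of $\val$ on a 2-layered network, $\val(qf) = \sum_{u \in U} (qf)_\Sigma(u) = q \sum_{u \in U} f_\Sigma(u) = q \cdot \val(f)$.

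Combining these three observations, $\Phi$ is a bijection that maps feasible $\mathcal B$-transshipments of value $v$ to feasible $\mathcal B'$-transshipments of value $qv$. Consequently the supremum of $\val$ on the right-hand side equals exactly $q$ times the supremum on the left, and both suprema are attained since the feasible regions are compact (capacities are finite). Hence the maximum equals $p$ on the left exactly when it equals $pq$ on the right, proving the claim. I do not anticipate a substantive obstacle here, as the lemma is essentially the statement that the linear program encoding \ac{rflow} is invariant under simultaneous scaling of all numerical data by a non-negative rational factor.
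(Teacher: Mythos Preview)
Your argument is essentially the same as the paper's: the paper's proof is a one-liner observing that $f \mapsto qf$ is a bijection between feasible transshipments of the two instances that scales the value by $q$, and you spell out the routine linearity checks that justify this.

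One small wrinkle: your treatment of the degenerate case $q=0$ is not quite right. You write that ``both instances admit only the zero flow,'' but the \emph{original} instance $(D,\capacity,\mathcal B)$ is unchanged and may admit many flows with positive value. In fact the biconditional as stated fails in the backward direction when $q=0$: the scaled instance always has maximum value $0=pq$ (whenever feasible), yet this does not pin down $p$ in the original. This is a defect of the lemma statement rather than your method; the paper sidesteps it by only ever applying the lemma with $q>0$, and its own proof does not address $q=0$ either. For $q>0$ your argument is complete and matches the paper.
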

\begin{proof}
  Clearly, since all capacities and excess values are multiplied by~$q$, it follows that~$f$ is a maximum~$\mathcal B$-transshipment if and only if~$f'$
  with $f'(u,w)\coloneqq q f(u,w)$ for all~$(u,w)\in A(D)$ is a maximum~$\mathcal B'$-transshipment.
\end{proof}

We write~$q\mathcal{B}$ for~$\mathcal{B}'$ to indicate that~$\mathcal{B}'$ was obtained from~$\mathcal{B}$ by scaling with factor~$q$.

\subsection{Algorithmic Consequences}
\label{ssec:algorithmic_consequences}

\subparagraph{2-Layer General Transshipment.}
Having discussed how to translate a \ac{rflow}~instance into a \acl{gf}~instance, we now consider the algorithmic implications for the \ac{rflow} problem. We obtain three algorithmic results as stated in the following theorem.

\newpage
\begin{theorem}
	\label{thm:bgt_algorithms}
	Let~$(D, \capacity\colon A(D)\to \N , \mathcal{B})$ be an instance of \ac{rflow}.
	\begin{enumerate}[(i)]
		\item If~$B_v = [a_v,b_v] \cap \Z$ for each vertex~$v$ with~$a_v \le 0 \le b_v$ and the excess constraints and capacities are polynomially bounded in the size of~$D$, then one can compute in randomized~$O(m^{1 + o(1)})$ time a maximum~$\mathcal B$-transshipment.
		\label{thm:bgt_with_s-t-flow}
		\item If~$B_v \subseteq \Z$ has gap at most one for each vertex~$v$ and capacities are bounded polynomially, then one can compute in polynomial time a maximum $\mathcal B$-transshipment or output that there exist no feasible flow. \label{thm:bgt_gap_one}
		\item Suppose we are given a tree decomposition for the underlying undirected graph of~$D$ of width~$\tw$. Moreover, suppose that~$B_v \subseteq \Z$ for each vertex~$v$ and that capacities are bounded polynomially.
		Let~$M\coloneqq \max_{v \in V(D)} \max \{|x|\mid x\in B_v\}$.
		Then, one can compute in~$(M+1)^{\tw + 1} n^{O(1)}$ time a maximum $\mathcal B$-transshipment or output that there exists no feasible flow. \label{thm:bgt_treewidth}
	\end{enumerate}
\end{theorem}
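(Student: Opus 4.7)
The plan is to dispatch the three parts with three different tools. For part~\ref{thm:bgt_with_s-t-flow}, I would reduce directly to a single maximum $s$-$t$-flow with integer, polynomially bounded capacities and invoke the randomized $O(m^{1+o(1)})$-time algorithm of~\cite{CKLPGS23}. For part~\ref{thm:bgt_gap_one}, I would translate the instance through \cref{lem:bgt_to_gf_reduction} into an \acl{gf} instance where the preserved max-gap of one lets me apply the polynomial-time algorithm of~\cite{Cor88}. For part~\ref{thm:bgt_treewidth}, I would run a direct bag-by-bag dynamic program on the given tree decomposition; feeding \cref{lem:bgt_to_gf_reduction} into the algorithm of \citet{MSS21} would instead yield an exponent of roughly~$3\tw$, not the claimed~$\tw+1$.

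\textbf{Parts~\ref{thm:bgt_with_s-t-flow} and~\ref{thm:bgt_gap_one}.} For part~\ref{thm:bgt_with_s-t-flow}, add new vertices~$s$ and~$t$, attach arcs~$(s,u)$ with capacity~$b_u$ for each~$u\in U$ and arcs~$(w,t)$ with capacity~$-a_w$ for each~$w\in W$. Since~$a_v\le 0\le b_v$ and~$B_v=[a_v,b_v]\cap\Z$, any integer $s$-$t$-flow in the augmented network restricts to a $\mathcal{B}$-transshipment in~$D$ of the same value, and conversely; as all capacities are integer and polynomially bounded, \cite{CKLPGS23} finishes in randomized $O(m^{1+o(1)})$ time. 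For part~\ref{thm:bgt_gap_one}, apply \cref{lem:bgt_to_gf_reduction} to obtain an equivalent \acl{gf} instance~$(G,\mathcal{B}')$. Property~\ref{prop:maxgap} ensures~$\mathcal{B}'$ has max-gap at most one, so \cite{Cor88} decides its feasibility in polynomial time. To turn feasibility into maximization, enumerate the polynomially many candidate transshipment values~$p$ and, using property~\ref{prop:correctness}, check existence of a $\mathcal{B}'$-factor of size exactly~$2p$ via a standard size-fixing modification of the degree constraints; the largest feasible~$p$ yields a maximum integral $\mathcal{B}$-transshipment.

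\textbf{Part~\ref{thm:bgt_treewidth}.} I would run a bottom-up dynamic program on a nice tree decomposition derived from the given one. By \cref{lem:integral_flow_bgt} and \cref{obs:integral_excess_constraints_bgt} it suffices to consider integer flows, so the state at a bag~$X_i$ is a tuple~$(e_v)_{v\in X_i}$ of integer excesses accumulated at each bag vertex by the arcs processed so far. The 2-layered structure is crucial: every~$u\in U$ has only outgoing arcs and every~$w\in W$ only incoming ones, so accumulated excesses grow monotonically toward their final value in~$B_v\subseteq[-M,M]$, and each coordinate takes at most~$M+1$ values. Introduce-vertex bags append a coordinate of value~$0$; forget bags discard states whose forgotten-vertex excess lies outside~$B_v$; join bags combine the two children's states componentwise; at each introduce-edge node for an arc~$(u,w)$ the DP branches over the at most $\capacity(u,w)+1$ integer flow values, updating~$e_u$ and~$e_w$ accordingly. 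Each bag holds at most~$(M+1)^{\tw+1}$ states and every transition is polynomial-time since capacities are polynomially bounded, yielding the overall $(M+1)^{\tw+1}n^{O(1)}$ running time.

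\textbf{Main obstacle.} The only delicate point is the state-space bound in part~\ref{thm:bgt_treewidth}: the figure~$(M+1)^{\tw+1}$ rests on the monotonicity argument enabled by 2-layeredness, which prevents intermediate excesses from overshooting~$[-M,M]$. Without it, the natural state space would be polynomially larger per coordinate, and the claimed exponent~$\tw+1$ would not be achievable. The other two parts are, once the right reductions are in place, essentially bookkeeping around known algorithms.
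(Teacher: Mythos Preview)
Part~(\ref{thm:bgt_with_s-t-flow}) is identical to the paper's argument. Part~(\ref{thm:bgt_gap_one}) follows the same reduction via \cref{lem:bgt_to_gf_reduction}, but you then invoke~\cite{Cor88}, which only decides feasibility, and propose to reach the maximum by a ``standard size-fixing modification of the degree constraints''. That step is not actually standard: it is unclear how to pin the factor size to exactly~$2p$ while keeping the max-gap at one, and you do not say how. The paper instead cites the maximization result of \citet{DP18}, which directly computes a maximum-size general factor under gap-one constraints and makes your detour through size-fixing unnecessary.

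Part~(\ref{thm:bgt_treewidth}) is where you genuinely diverge, and where there is a real gap. The paper does not run a direct DP on~$D$; it applies \cref{lem:bgt_to_gf_reduction} to obtain a \acl{gf} instance of treewidth at most~$\tw+1$ (property~(\ref{prop:treewidth})) and then invokes the algorithm of \citet{MSS21} on that instance to obtain the $(M+1)^{\tw+1}n^{O(1)}$ bound. You dismissed this route based on the $(M+1)^{3\tw}$ figure quoted in the preliminaries, but the paper's proof relies on the sharper statement from the same reference. Your alternative direct DP has its own problem: at a join node, merging the two children's tables by summing the per-vertex partial excesses is a $(\tw+1)$-dimensional $(\max,+)$-convolution over coordinate ranges of size~$M+1$, which in general costs $(M+1)^{2(\tw+1)}$ time, not $(M+1)^{\tw+1}$. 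Your monotonicity argument bounds only the number of states per bag, not the cost of combining them at joins, so as written your DP yields only $(M+1)^{2(\tw+1)}n^{O(1)}$.
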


\begin{proof}
	We prove each of the three statements separately.

	(\ref{thm:bgt_with_s-t-flow}): We provide a linear-time reduction to~\textsc{Maximum Flow}. If the edge capacities are integers and polynomially bounded, then an (integral) maximum flow for this instance can be computed in randomized $m^{1 + o(1)}$ time \cite{CKLPGS23}.
	Now consider the following \textsc{Maximum Flow} instance, where we take the network~$D$, add a source vertex~$s$, a sink vertex~$t$ and the following arcs:
	\begin{itemize}
	\item For every~$u \in U$, we add the arc~$(s,u)$ with capacity~$b_u$.
	\item For every~$w \in W$, we add the arc~$(w,t)$ with capacity~$-a_w$.
	\end{itemize}
	This network can be constructed in~$O(m_D + n_D) = O(m_D)$ time.
	We now compute a maximum flow~$f$ for this network and obtain a~$\mathcal B$-transshipment by setting~$f^* (u,w) \coloneqq f(u,w)$ for all~$(u,w) \in A(D)$.
	Note that~$f^*$ is feasible since~$f^*_\Sigma(u)=f^+_{\Sigma} (u) \in [0, b_u]$ for all~$u\in U$ and~$f^*_\Sigma(w)=-f^-_{\Sigma} (w) \in [a_w,0]$ for all~$w\in W$.
	Furthermore, $f^*$ is a maximum flow since~$\val (f^*) = \val (f)$ and any flow in the transshipment instance can be directly translated to an~$s$-$t$-flow~$f'$ of the same value as follows:
	\begin{equation*}
	f'(a) \coloneqq
		\begin{cases}
			f^*_{\Sigma}(u) & \text{if } a = (s,u) \text{ for some } u \in U \\
			f^* (a) & \text{if } a \in A(D) \\
			- f^*_{\Sigma}(w) & \text{if } a = (w,t) \text{ for some } w \in W.
		\end{cases}
	\end{equation*}

	(\ref{thm:bgt_gap_one}): We apply \cref{lem:bgt_to_gf_reduction} and obtain a \acl{gf}~instance~$(G, \mathcal{B}')$ which has gap exactly one by~\cref{lem:bgt_to_gf_reduction}~(iii).
	Using an algorithm from \citet{DP18} we can now find a maximum $\mathcal B'$-factor~$F$ of~$G$ in polynomial time.
	In particular, \cref{lem:bgt_to_gf_reduction} gives us a $\mathcal B$-transshipment of value~$\frac{|F|}{2}$ and implies that this is indeed the maximum.
	On the other hand, if there is no solution to~$(G, \mathcal{B}')$, then there is also no feasible flow for~$(D, \capacity, \mathcal{B})$.

	(\ref{thm:bgt_treewidth}): We again apply \cref{lem:bgt_to_gf_reduction} and obtain a \acl{gf}~instance~$(G, \mathcal{B}')$.
	By \cref{lem:bgt_to_gf_reduction}~(\ref{prop:treewidth}), we get a tree decomposition for~$G$ of width at most~$\tw + 1$ and, by construction of the \acl{gf}~instance, we know that~$\max \{\max B' \mid B' \in \mathcal{B}'\}\le M$ (note that we can assume~$M\ge 2$ since otherwise we are in case~(\ref{thm:bgt_gap_one}) which is polynomial-time solvable).
	This allows us to apply a result from \citet{MSS21}, which states that a maximum $\mathcal B'$-factor of~$G$ can be computed in~$(M+1)^{\tw + 1}n^{O(1)}$ time.
	If there is no solution to~$(G, \mathcal{B}')$, then there is also no feasible flow for~$(D, \capacity, \mathcal{B})$.
\end{proof}

\subparagraph{$\tau$-Bounded-Density Edge Deletion.}
In \cref{ssec:transshipment} we described how to reduce \ac{tbded} to \ac{rflow}.
Therefore, the results from \Cref{thm:bgt_algorithms} for \ac{rflow} translate directly to \ac{tbded} and imply the algorithmic results in \cref{thm:dichotomy,thm:fpt-tw}:

\newpage
\begin{theorem}
\label{thm:bded_algorithms}
	Let~$\mathcal{I} = (G, k)$ be a \acl{tbded}~instance.
	\begin{enumerate}[(i)]
		\item If~$\tau \in \N$, then one can decide~$\mathcal I$ in randomized $O(m^{1 + o(1)})$ time.
		\label{thm:bded_integral}
		\item If~$2\tau \in \N$, then one can decide $\mathcal I$ in polynomial time. \label{thm:bded_half_integral}
		\item If $\tau = \frac{a}{b}$ for some fixed~$a,b \in \N$ and a tree decomposition for~$G$ of width~$\tw$ is given, then one can decide~$\mathcal I$ in~$(\max\{a,b\}+1)^{\tw + 1}n^{O(1)}$ time. \label{thm:bded_treewidth}
	\end{enumerate}
\end{theorem}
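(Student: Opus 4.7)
My plan is to reduce each of the three statements to \ac{rflow} using the linear-time transformation~$T$ from~\cref{ssec:transshipment} and then invoke the corresponding part of~\cref{thm:bgt_algorithms}. Once I have a maximum $\mathcal B$-transshipment of $T(\mathcal I) = (D, \capacity, \mathcal B)$, \cref{lem:bded_iff_flow} immediately answers $\mathcal I$ via comparison to the threshold $m_G - k$; since $D$ has $O(m_G)$ vertices and arcs, every running-time bound on the transshipment side carries over to \ac{tbded} with at most linear overhead.

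For part~(\ref{thm:bded_integral}) I observe that $T$ already produces an integer instance: $B_e = \{0, 1\} = [0, 1]\cap \Z$ for every edge-vertex, $B_v = [-\tau, 0]$ is an integer interval around zero since $\tau \in \N$, and every capacity equals $1$. These fit the hypothesis of~\cref{thm:bgt_algorithms}(\ref{thm:bgt_with_s-t-flow}) and yield the claimed randomized $O(m^{1+o(1)})$ time.

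Parts~(\ref{thm:bded_half_integral}) and~(\ref{thm:bded_treewidth}) I will handle with a common two-step scheme: first scale $T(\mathcal I)$ by the denominator of $\tau$ via~\cref{lem:scaling_bgt}, then apply~\cref{lem:integral_flow_bgt} together with~\cref{obs:integral_excess_constraints_bgt} to restrict to an equivalent integer instance. For~(\ref{thm:bded_half_integral}), scaling by $2$ turns $\mathcal B$ into $B_e = \{0, 2\}$ (gap $1$) and $B_v = [-2\tau, 0] \cap \Z$ (contiguous, gap $0$); the overall max-gap is $1$, so \cref{thm:bgt_algorithms}(\ref{thm:bgt_gap_one}) solves the scaled instance in polynomial time, and dividing the returned value by $2$ recovers the maximum transshipment of $T(\mathcal I)$. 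For~(\ref{thm:bded_treewidth}), scaling by $b$ yields $B_e = \{0, b\}$ and $B_v = [-a, 0] \cap \Z$ with maximum value $M = \max\{a, b\}$; to apply~\cref{thm:bgt_algorithms}(\ref{thm:bgt_treewidth}) I additionally need a tree decomposition of the underlying undirected graph of $D$, which is precisely the subdivision of $G$. Starting from a given tree decomposition of $G$ of width $\tw$, I will attach, for each edge $\{u, v\} \in E(G)$, a new leaf bag $\{u, v, e\}$ to an existing bag containing both $u$ and $v$; this produces a valid tree decomposition of $D$ of width $\max\{\tw, 2\}$, so the overall running time simplifies to $(\max\{a,b\}+1)^{\tw+1} n^{O(1)}$ (the correction from $\max\{\tw, 2\}$ instead of $\tw$ is absorbed into the polynomial factor, since $a, b$ are fixed).

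The conceptual heavy lifting is already done in~\cref{ssec:general_factors}, so the remaining work is mostly bookkeeping: picking the right scaling factor for each case, pulling back the scaled optimum to the original threshold $m_G - k$ via~\cref{lem:scaling_bgt}, and lifting the tree decomposition from $G$ to the subdivision $D$ without inflating its width beyond $\max\{\tw, 2\}$. I do not anticipate any genuine obstacle beyond these consistency checks.
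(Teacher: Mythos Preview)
Your proposal is correct and follows essentially the same route as the paper: reduce via~$T$ to \ac{rflow}, scale by the denominator of~$\tau$, restrict to integers via \cref{lem:integral_flow_bgt} and \cref{obs:integral_excess_constraints_bgt}, and invoke the matching part of \cref{thm:bgt_algorithms}. One small imprecision: in part~(\ref{thm:bded_integral}) the construction~$T$ sets~$B_v = [-\tau,0]$ (the full interval, not $[-\tau,0]\cap\Z$), so you still need the same integrality step you spell out for~(\ref{thm:bded_half_integral}) and~(\ref{thm:bded_treewidth}) before \cref{thm:bgt_algorithms}(\ref{thm:bgt_with_s-t-flow}) applies---the paper does this explicitly.
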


\begin{proof}
	Recall the construction of a \ac{rflow}~instance~$T(\mathcal{I})=(D, \capacity, \mathcal{B})$ from \Cref{ssec:transshipment}, which has maximum $\mathcal B$-transshipment of value at least~$m_G - k$ if and only if~$(G, k, \tau)$ is a yes-instance (see \cref{lem:bded_iff_flow}).
	The capacities on the arcs in~$D$ are one and the excess constraints are of the form~$\{0, 1\}$ and $[-\tau, 0]$. 
	Since~$\tau$ is bounded polynomially in~$n_D$, clearly so are all the capacities and excess constraints. 
	This also holds when scaling the instance with a constant, as described in \cref{lem:scaling_bgt}.
	Using the construction, we can derive the three statements.

	(\ref{thm:bded_integral}): First, note that $T(\mathcal{I})$ can be constructed in~$O(m_G + n_G) = O(m_G)$ time.
	The excess constraints are of the form~$\{0,1\}$ and~$[-\tau,0]$.
	Since~$\tau \in \N$ and a feasible solution exists (e.g. sending a flow of~0 along all arcs), we can apply \cref{lem:integral_flow_bgt} which guarantees us the existence of an integral maximum~$\mathcal B$-transshipment.
	Subsequently, \cref{obs:integral_excess_constraints_bgt} tells us that we can restrict all the excess constraints to their intersection with~$\Z$.
	This allows us to apply \Cref{thm:bgt_algorithms}~(\ref{thm:bgt_with_s-t-flow}), which concludes the algorithm.

	(\ref{thm:bded_half_integral}): By \cref{lem:scaling_bgt}, we can scale the \ac{rflow} instance~$T(\mathcal{I})$ by the factor two and the obtained excess constraints are of the form~$\{0,2\}$ and~$[-2\tau,0]$.
	Since~$2\tau$ is integer and a feasible solution exists, we can again apply \cref{lem:integral_flow_bgt} which gives us the existence of an integral maximum flow.
	Again, \cref{obs:integral_excess_constraints_bgt} tells us that we can restrict all the excess constraints to their intersection with~$\Z$.
	Note that the excess constraints are now of the form~$\{0,2\}$ and~$[-2\tau, 0] \cap \Z$.
	In particular, they have gap at most one and therefore the obtained transshipment instance satisfies the conditions of \Cref{thm:bgt_algorithms}~(\ref{thm:bgt_gap_one}).
	Since we scaled~$T(\mathcal I)$ by two, the maximum~$2\mathcal B$-transshipment value is at least~$2(m_G - k)$ if and only if~$(G, k, \tau)$ is a yes-instance.

	(\ref{thm:bded_treewidth}): By \cref{lem:scaling_bgt}, we can scale the \ac{rflow} instance~$T(\mathcal{I})$ by the factor~$b$ and the obtained excess constraints are of the form~$\{0,b\}$ and~$[-b\tau,0] = [-a,0]$.
	This scaled version has a maximum $b\mathcal B$-transshipment of value at least~$b(m_G - k)$ if and only if the original instance has a maximum flow of at least~$m_G - k$.
	By the same arguments as above, we can restrict the excess constraints to their intersections with~$\Z$.
	Furthermore, we easily obtain a tree decomposition for the underlying undirected graph of~$D$ of width~$\tw$ since the graph can be obtained by subdividing all edges of~$G$ \cite{CyganFKLMPPS15}.
	To finish the proof, note that we can now apply~\cref{thm:bgt_algorithms}~(\ref{thm:bgt_treewidth}) to compute a maximum transshipment in~$(\max\{a,b\}+1)^{\tw + 1}n^{O(1)}$ time.
	We then return yes if and only if the value is at least~$b(m_G - k)$.
\end{proof}

\section{NP-hardness Results}
\label{sec:hardness}
In this section, we complement our algorithmic findings from \cref{sec:flow} by showing NP-hardness for every constant target
density~$\tau \in \{q \in \Q \mid q \ge \nicefrac{2}{3} \land q\notin \HI\}$, thereby completing the proof of \cref{thm:dichotomy}.
We prove the result in two steps.
First, we show that any fixed target density~$\tau \in [\nicefrac{2}{3},1)$ yields NP-hardness.
Afterwards, we show that \ac{tbded} is NP-hard for any constant target density~$\tau \in \{q \in \Q \mid q>1 \land q \notin \HI\}$.

\subparagraph*{Case~$\tau < 1$.}
Note that for $\tau < 1$ any solution graph (the graph obtained by removing at most~$k$ edges) cannot contain a cycle as a cycle has density~$1$.
Hence, any solution graph is a forest.
The maximum density of a forest is determined by the size of a largest tree in it as a tree with~$n$ nodes has~$n-1$ edges and thus density~$1-\nicefrac{1}{n}$, which is monotonically increasing for increasing values of~$n$.
Thus, any target density~$\tau <1$ can be replaced by the largest value~$\tau' \leq \tau$ with~$\tau'=\nicefrac{\ell}{\ell+1}$ for some~$\ell \in \N$.
For~$\ell=0$, we have to remove all edges and thus \ac{tbded} becomes trivial.
For~$\ell=1$, the problem is equivalent to finding a smallest set of edges that only leaves a matching.
This can be solved in polynomial time by computing a maximum matching and returning all non-matching edges.
For~$\ell\in \{2,3\}$, \citet{BNV25} showed NP-hardness.
We extend their hardness result for~$\ell=3$ to arbitrary~$\ell$.
The proof consists of a reduction from \acf{xlc} and is sketched in \cref{fig:hardness_below_one}.
\begin{figure}
	\centering
        \begin{tikzpicture}[scale=0.80,node distance=8mm]%
			\def\xcoor{2}
			\def\c{4}

			\begin{scope}[xshift=-2cm,start chain=going right]
				\foreach \i in {1,2}{
					\node[on chain,circle,draw, fill, inner sep=2pt] (x-\i)[label=below:{$y_\i$}] {};
				}
				\node[on chain, circle, inner sep=2pt] (x-dots1) {\ldots};
				\node[on chain, circle, draw,fill,inner sep=2pt, label=below:{$y_{\ell}$}] (x-l) {};
				\node[on chain, circle, inner sep=2pt] (x-dots2) {\ldots};
				\node[on chain, circle, draw,fill,inner sep=2pt, label=below:{$y_{\ell q}$}] (x-n) {};
			\end{scope}

			\begin{scope}[yshift=3cm,start chain=going right]
				\foreach \i in {1,2}{
					\node[on chain, circle,draw,fill,inner sep=2pt,label=above:{$c_\i$}] (c-\i){};
				}
				\node[on chain, circle,inner sep=2pt] (c-dots) {\ldots};
				\node[on chain, circle,draw,fill,inner sep=2pt, label=above:{$c_t$}] (c-m) {};
			\end{scope}

			\begin{scope}[yshift=3cm,xshift=7cm,start chain=going right]
				\node[on chain, circle,draw,fill,inner sep=2pt,label=above:{$u_1$}](u-1) {};
				\node[on chain, circle,draw,fill,inner sep=2pt,label=above:{$u_2$}, xshift=1cm](u-2) {};

				\node[on chain, circle,inner sep=2pt] (u-dots) {\ldots};
				\node[on chain, circle,draw,fill,inner sep=2pt, label=above:{$u_{t-q}$}] (u-m) {};
			\end{scope}

			\begin{scope}
                \def\circSize{5}
                \def\circEdgeLen{1cm}

                \tikzmath{\angleRange=90;
                        \offset = \angleRange/(\circSize-1);
                        \start=90+((360-\angleRange)/2)-\offset;}
                \foreach \c in {u-1, u-2, u-m}
                {
                    \foreach \i in {1,...,\circSize}
                    {
                        \node[position=\start+\offset*\i:{\circEdgeLen} from \c](tmp){};
                        \draw[gray] (\c) -- (tmp);
                    }
                }

                \node[vertex_small, inner sep=1, position=\start+\offset:{\circEdgeLen} from u-1] {};
                \node[position=\start+\offset:\circEdgeLen*1.02 from u-1] {$z_{1}^{1}$};
                \node[vertex_small, inner sep=1, position=\start+\circSize*\offset:{\circEdgeLen} from u-m] {};
                \node[position=\start+\circSize*\offset:\circEdgeLen*1.02 from u-m] {$z_{t-q}^{\ell-1}$};
			\end{scope}

			\draw[thick](x-1)--(c-1);
			\draw[thick](x-2)--(c-1);
			\draw[thick](x-l)--(c-1);

			\draw[thick](x-2)--(c-2);
			\draw[thick](x-n)--(c-m);
			\draw[thick] (x-n) -- (c-2);

			\draw[color=gray](c-1)--(x-dots1);
			\draw[color = gray](c-1) edge[bend left=16] (x-dots1);
			\draw[color = gray](c-1) edge[bend right=16] (x-dots1);

			\draw[color=gray](x-dots1) -- (c-2);

			\draw[color = gray](x-dots2) -- (c-2);
			\draw[color = gray](x-dots2) edge[bend left=10] (c-2);
			\draw[color = gray](x-dots2) edge[bend left=20] (c-2);

			\draw[color = gray](x-dots2)--(c-m);
			\draw[color = gray](x-dots2) edge[bend left=16] (c-m);
			\draw[color = gray](x-dots2) edge[bend right=16] (c-m);
			\draw[color = gray](x-dots1) edge[bend left=10] (c-m);
			\draw[color = gray](x-dots1) edge[bend right=10] (c-m);

			\node[inner sep = 16pt, very thick, italyGreen, draw,rounded corners, dashed, fit=(c-1) (c-2) (c-m),label=above:$C$](C) {};
			\node[inner sep = 12pt, very thick, italyRed, draw,rounded corners, dashed, fit=(x-1) (x-2) (x-n),label=below:$X$] (X){};
			\node[inner sep = 9pt, very thick, black!40, draw,rounded corners, dashed, fit=(u-1) (u-2) (u-dots) (u-m),label=above:$U$] (U){};

			\draw[gray] ([yshift=.1cm]C.east) -- ([yshift=-.3cm]U.west);
			\draw[gray] ([yshift=-.2cm]C.east) -- ([yshift=-.1cm]U.west);
			\draw[gray] ([yshift=.4cm]C.east) -- ([yshift=.1cm]U.west);
			\draw[gray] ([yshift=.2cm]C.east) -- ([yshift=.3cm]U.west);
			\draw[gray] ([yshift=-.1cm]C.east) -- ([yshift=.4cm]U.west);

			\end{tikzpicture}
	\caption{A schematic illustration of the reduction in \cref{thm:hardness_below_one}.
	Given an instance of \ac{xlc} with universe~$X = \{x_1, x_2, \dots, x_{\ell q}\}$ and set collection~$C = \{C_1, C_2, \dots, C_t\}$, we add a vertex~$y_i$ for each element~$x_i \in X$ and a vertex~$c_j$ for each set~$C_j \in C$.
	We add an edge~$\{y_i, c_j\}$ whenever~$x_i \in C_j$.
	We further add dummy stars with~$\ell-1$ leaves and fully connect their centers to the vertices representing~$C$.
	Any \ac{tbded} solution then leaves~$t$ stars with~$\ell$ leaves and there is a a one-to-one mapping between (green) star centers and sets selected in an \ac{xlc} solution.
	}
	\label{fig:hardness_below_one}
\end{figure}
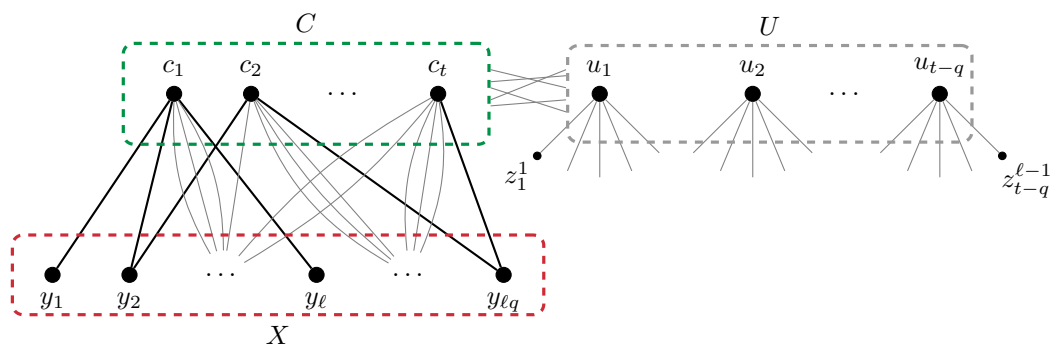

\begin{theorem}
	\label{thm:hardness_below_one}
	\ac{tbded} is NP-hard for any~$\tau = \nicefrac{\ell}{\ell+1}$ with~$\ell \geq 3$.
\end{theorem}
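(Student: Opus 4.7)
The plan is to reduce from \acf{xlc}, which is NP-hard for every $\ell\ge 3$. Given an \ac{xlc} instance with universe $X$ of size $\ell q$ and a collection $C=\{C_1,\ldots,C_t\}$ of $\ell$-sets (we may assume $t>q$; the case $t=q$ is trivial), I will build the graph~$G$ sketched in \cref{fig:hardness_below_one}: an element-vertex $y_i$ for each $x_i\in X$, a set-vertex $c_j$ for each $C_j$, an edge $\{y_i,c_j\}$ whenever $x_i\in C_j$, and $t-q$ dummy gadgets, each consisting of a vertex $u_k$ together with $\ell-1$ private pendants $z_k^1,\ldots,z_k^{\ell-1}$, where $u_k$ is additionally made adjacent to every $c_j$. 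The deletion budget $k$ is chosen so that a valid solution must keep exactly $\ell t$ edges, i.e.\ $k$ equals the total number of edges of $G$ minus $\ell t$.

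I then establish the structural backbone of the equivalence. Since $\tau=\ell/(\ell+1)<1$, any surviving graph must be a forest, and the bound $(n'-1)/n'\le\ell/(\ell+1)$ forces every tree to have at most $\ell+1$ vertices. A direct count gives $|V(G)|=(\ell+1)t$, so the number of surviving edges is at most $(\ell+1)t-t=\ell t$, with equality if and only if $V(G)$ can be partitioned into exactly $t$ connected classes of size $\ell+1$. Hence the \ac{tbded} instance is a yes-instance precisely when such a partition exists.

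For the forward direction, from an exact cover $\mathcal{C}'\subseteq C$ of size $q$ I form the stars $\{c_j\}\cup\{y_i:x_i\in C_j\}$ for each $C_j\in\mathcal{C}'$, and pair each $u_k$ with a distinct $c_{j'}\in C\setminus\mathcal{C}'$ (possible since $|U|=t-q=|C\setminus\mathcal{C}'|$ and $U$ is complete to $C$) to form the classes $\{u_k,z_k^1,\ldots,z_k^{\ell-1},c_{j'}\}$. For the converse, I observe that each pendant $z_k^i$ has only $u_k$ as neighbor, so $z_k^i$ must share its class with $u_k$; with $u_k$ and its $\ell-1$ pendants already filling $\ell$ of the $\ell+1$ slots, the remaining slot of $u_k$'s class must be occupied by a $c_j$ (the only remaining vertex type adjacent to $u_k$). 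Thus $t-q$ distinct $c_j$'s are absorbed into dummy classes, and the other $q$ classes each contain one $c_j$ together with $\ell$ of its $y$-neighbors, which forces the corresponding $q$ sets to partition $X$ — an exact cover.

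The main obstacle is the converse direction, where pathological partitions have to be ruled out — for instance, classes mixing several $c_j$'s with only some of the pendants $z_k^i$, or $c_j$-classes combining both $y_i$'s and some $u_k$. The pendants $z_k^i$ are the linchpin: being degree-one in $G$, each of them anchors $u_k$ to its own class, and combined with the hard size cap of $\ell+1$ this closes off all alternative configurations. Once this anchoring argument is formalised, the remaining case analysis reduces to the simple counting above.
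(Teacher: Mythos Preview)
Your proposal is correct and matches the paper's proof essentially verbatim: the same reduction from \textsc{Exact Cover by $\ell$-Sets}, the same gadget graph (element vertices, set vertices, $t-q$ dummy stars with $\ell-1$ pendants whose centers are complete to the set vertices), the same budget $k=m-\ell t$, and the same two-direction argument. Your counting observation that $|V(G)|=(\ell+1)t$ forces exactly $t$ trees of size $\ell+1$ each, together with the pendant-anchoring argument that fixes each $u_k$'s class, is precisely the mechanism the paper uses in its converse direction.
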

\begin{proof}
	We reduce from \acl{xlc}, which is NP-hard for~$\ell \geq 3$~\cite{KH78}.
	It is defined as follows.

	\problemDef{\acf{xlc}}
	{A set~$X$ with~$|X| = \ell q$ and a collection~$C$ of~$\ell$-element subsets of~$X$.}
	{Is there a subcollection~$C'$ of~$C$ where every element of~$X$ occurs in exactly one set in~$C'$?}

	The proof is a straightforward adaption of the hardness proof for~$\tau=\nicefrac{3}{4}$~\cite{BNV25}; see \cref{fig:hardness_below_one} for a sketch.
	Let~$(X, C)$ be an instance of \ac{xlc} with~$X = \{x_1, x_2, \dots, x_{\ell q}\}$ and~$C = \{C_1, C_2, \dots, C_t\}$.
	If~$t < q$, then we trivially have a no-instance as not all elements appear in a set in~$C$.
	So we assume~$t \geq q$ and construct a graph~$G$ as follows:
	We start with one vertex~$y_i$ for each element~$x_i \in X$ and one vertex~$c_j$ for each set~$C_j \in C$.
	We add an edge~$\{y_i, c_j\}$ whenever~$x_i \in C_j$.
	We then add a set of~$t-q$ stars~$S_1,S_2,\ldots,S_{t-q}$ each with~$\ell-1$ leaves to~$G$.
	Formally, for each~$i \in [t-q]$, we add a vertex~$u_i$ and for each~$i\in [t-q]$ and each~$j \in [\ell-1]$, we add a vertex~$z_i^j$ and the edge~$\{u_i,z_i^j\}$.
	Next, we add all edges~$\{c_j,u_i\}$ with~$j \in [t]$ and~$i \in [t-q]$.
	This concludes the construction of~$G$.
	Let~$m$ be the number of edges in it.
	To conclude the entire construction, we set~$k \coloneqq m - \ell t$.

	Since the construction takes polynomial time, it only remains to show correctness.
	To this end, first assume that there is a solution~$C'$ for the input instance~$(X,C)$ of \ac{xlc}.
	Note that~$|C'|=q$.
	We construct a set~$B$ of~$\ell t$ edges such that deleting all edges except for those in~$B$ is a solution for the constructed instance of \ac{tbded}.
	For each~$C_j \in C'$, include the~$\ell$ edges joining~$c_j$ to vertices~$y_i$ with~$x_i \in C_j$.
  Note that there are exactly~$\ell q$ such edges.
  Moreover, for each~$i \in [t-q]$ and each~$j \in [\ell-1]$, we add the edge~$\{u_i,z_i^j\}$ to~$B$.
  Finally, for each~$C_j \notin C'$, we pick a distinct vertex~$u_i$ and include~$\{u_i, c_j\}$ in~$B$.
  Note that there are exactly~$t-q$ sets in~$C \setminus C'$ and also~$t-q$ vertices~$u_i$. Therefore, $B$ contains a perfect matching between these two sets of vertices.
  Overall, $B$ has size~$\ell q + (\ell-1)(t-q) + (t-q) = \ell t$.
  Note that each vertex~$c_j$ belongs to a different connected component in the graph induced by~$B$.
  Moreover, each such vertex belongs to a tree with~$\ell$ additional vertices (either a star~$S_j$ or~$\ell$ $y$-vertices).
  Thus, the maximum density is~$\nicefrac{\ell}{\ell+1}$, concluding the first part of the proof.

  For the other direction, assume that there is a set~$F$ of at most~$k$ edges whose removal yields a graph with maximum density at most~$\tau=\nicefrac{\ell}{\ell+1}$.
  Note that the constructed graph~$G$ has~$(\ell+1) t$ vertices.
  Moreover, $G - F$ has at least~$m - |F| \geq m - k = \ell t$ edges.
	Since any cycle has density~$1$, $G-F$ is a forest and since any tree with~$\ell+2$ vertices has~$\ell+1$ edges and therefore density~$\nicefrac{\ell+1}{\ell+2} > \nicefrac{\ell}{\ell+1}$, it holds that each tree in~$G-F$ contains at most~$\ell+1$ vertices.
	If~$G-F$ contains exactly~$t$ trees, each with~$\ell+1$ vertices and~$\ell$ edges, then~$G-F$ contains~$\ell t$ edges and~$F$ contains~$m-\ell t = k$ edges.
	Since $G$ is connected (assuming each element appears in at least one set in~$C$), increasing the number of trees in~$G-F$ or decreasing the size of any tree in~$G-F$ increases the size of~$F$ to above~$k$.
	Thus, $G-F$ contains exactly~$t$ trees and each tree contains~$\ell+1$ vertices and~$\ell$ edges.
	This implies that no edge~$\{u_i,z_i^j\}$ for any~$i \in [t-q]$ and any~$j \in [\ell-1]$ is contained in~$F$.
	Moreover, each vertex~$u_j$ is adjacent to exactly~$\ell$ vertices in~$G-F$ and thus exactly one edge~$\{u_i,c_j\}$ is not contained in~$F$ for each~$i \in [t-q]$ (and the index~$j$ is unique for each~$i$).
	Consider the remaining~$q$ vertices~$c_j$ that are not connected to a star~$S_i$.
	We claim that the corresponding sets~$C_j$ are a solution for the original instance of \ac{xlc}.
	Consider an arbitrary vertex~$y_i$ (corresponding to an element~$x_i \in X$).
	By construction, $y_i$ is only adjacent to vertices~$c_j$ with~$x_i \in C_j$.
	Since each tree in~$G-F$ contains exactly~$\ell+1$ vertices and only~$q$ vertices~$c_j$ and~$\ell q$ vertices~$y_i$ are not yet distributed, each such undistributed vertex~$c_j$ must be adjacent to~$\ell$ vertices~$y_i$ each and each vertex~$y_i$ is only adjacent to a single vertex~$c_j$ in~$G-F$.
	Note that this exactly corresponds to an exact cover, concluding the proof.
\end{proof}

\subparagraph*{Case~$\tau > 1$.}
Our goal is to show that \ac{tbded} is NP-hard for any constant~$\tau = \nicefrac{p}{q} > 1$ with~$\tau \notin \HI$.
The proof will be a reduction from \acl{vc} on~$q$-regular graphs which mostly consists of two types of gadgets: \emph{vertex gadgets} and \emph{edge gadgets}.
On a high level, the vertex gadgets are balanced graphs with density~$\tau$.
The edge gadgets have density slightly below~$\tau$, but there are also two edges connecting each edge gadget to the two corresponding vertex gadgets.
Combined, an edge gadget and the two corresponding vertex gadgets have density slightly larger than~$\tau$ (in such a way that~$\nicefrac{(m-\nicefrac{1}{q})}{n}=\tau$).
The idea is that deleting one edge in a vertex gadget is sufficient to ``fix'' all adjacent edge gadgets.

The vertex gadget will be based on the following lemma.
We state it in a more general form which can be applied whenever a solution graph contains a cycle.
It states that in a cycle, we can always assume that at least one edge in a fractional orientation is fully assigned to one of its endpoints.
The idea is to simply shift the fractional orientation around the cycle until this is no longer possible in which case one of the edges fully assigns its fractional orientation to one endpoint.

\begin{lemma}
	\label{lem:1-0-edge}
	Let~$G$ be a graph containing a cycle $C$ and let~$\orient$ be a fractional orientation of~$G$.
	Then, there exists a fractional orientation~$\orient'$ with~$\deg_{\orient'}^-(v) = \deg_{\orient}^-(v)$ for all~$v \in V(G)$ and an edge~$e=\{u,v\}$ in~$C$ with~$\orient'(e)^u = 1$, that is, $e$ is completely assigned to one of its endpoints by the fractional orientation~$\orient'$.
	Given~$G$,~$C$, and~$\orient$, then~$\orient'$ can be found in linear time with respect to the size of~$G$.
\end{lemma}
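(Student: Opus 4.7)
The plan is to shift fractional weight uniformly around the cycle $C$ until at least one edge becomes fully oriented, while preserving every in-degree. First I would fix a traversal direction of $C$, listing its vertices as $v_1, v_2, \ldots, v_k$ and its edges as $e_i = \{v_i, v_{i+1}\}$ for $i \in [k]$ (indices modulo $k$, so $v_{k+1} = v_1$). For a parameter $\epsilon \ge 0$, I would define a candidate orientation $\orient_\epsilon$ that agrees with $\orient$ on every edge outside $C$ and, on each cycle edge $e_i$, sets
\[
  \orient_\epsilon(e_i)^{v_i} := \orient(e_i)^{v_i} - \epsilon \qquad \text{and} \qquad \orient_\epsilon(e_i)^{v_{i+1}} := \orient(e_i)^{v_{i+1}} + \epsilon.
\]
The defining identity $\orient_\epsilon(e_i)^{v_i} + \orient_\epsilon(e_i)^{v_{i+1}} = 1$ is preserved by construction.

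The key verification is that $\orient_\epsilon$ has the same in-degree as $\orient$ at every vertex. Vertices outside $C$ are clearly unaffected, so the substantive claim is that at each cycle vertex $v_i$ the $-\epsilon$ update on edge $e_i$ (where $v_i$ is the "tail" whose contribution drops) is exactly cancelled by the $+\epsilon$ update on edge $e_{i-1}$ (where $v_i$ plays the role of the "head" $v_{(i-1)+1}$ whose contribution rises). This telescoping is the heart of the argument and follows immediately from how $\orient_\epsilon$ was defined. Once it is established, I would pick the largest admissible shift $\epsilon^* := \min_{i \in [k]} \orient(e_i)^{v_i}$ and set $\orient' := \orient_{\epsilon^*}$; all values remain in $[0,1]$, and for any minimizing index $i^*$ one gets $\orient'(e_{i^*})^{v_{i^*}} = 0$, so $\orient'(e_{i^*})^{v_{i^*+1}} = 1$, exhibiting the promised fully-oriented edge.

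For the running-time claim, I would note that $\epsilon^*$ is computable by a single pass over the $k$ cycle edges, after which updating the $2k$ affected entries takes another linear pass, giving total time $O(|V(G)| + |E(G)|)$. I do not anticipate any real obstacle: the only subtlety is carefully tracking the dual role each cycle vertex plays (tail of $e_i$ and head of $e_{i-1}$), and this bookkeeping is exactly what makes the shifts cancel at every cycle vertex, so the construction is essentially forced by the invariance requirement.
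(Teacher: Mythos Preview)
Your proposal is correct and follows essentially the same approach as the paper's proof: both fix a traversal direction of $C$, shift each cycle edge's orientation by the minimum value $\epsilon^* = \min_i \orient(e_i)^{v_i}$ in that direction, and observe that the $\pm\epsilon$ contributions cancel at every cycle vertex while the minimizing edge becomes fully oriented. The only cosmetic difference is that the paper first checks whether some edge is already fully oriented, which your parametrized formulation subsumes (it simply yields $\epsilon^* = 0$ in that case).
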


\begin{proof}
	Let~$C=(v_1, \dots, v_\ell, v_{\ell + 1} = v_1)$.
	If there already exists an edge~$\{v_i,v_{i+1}\}$ in~$C$ with~$\orient(v_iv_{i+1})^{v_i} = 1$ or~$\orient(v_iv_{i+1})^{v_{i+1}} = 1$, then we are done.
	Otherwise, let~${\varepsilon_i := \orient(v_iv_{i+1})^{v_i}}$ for each~$i \in [\ell]$, $\varepsilon := \min_{i \in [\ell]} \varepsilon_i$ and let~$j$ be an index such that~$\varepsilon = \varepsilon_j$.
	We obtain the desired fractional orientation~$\orient'$ from~$\orient$ by setting~$\orient'(e) = \orient(e)$ for all edges in~$G$ that are not~$\{v_i,v_{i+1}\}$ for some~$i\in [\ell]$,~$\orient'(v_iv_{i+1})^{v_{i+1}} \coloneqq \orient(v_iv_{i+1})^{v_{i+1}} + \varepsilon$ and~$\orient'(v_iv_{i+1})^{v_{i}} = \orient(v_iv_{i+1})^{v_{i}} - \varepsilon$ for all~$i \in [\ell]$.
	Note that this is possible by the definition of~$\varepsilon$ and~$\orient'(v_jv_{j+1})^{v_{j+1}} = 1$.
	Moreover, $\deg_{\orient'}^-(v) = \deg_{\orient}^-(v)$ for all~$v \in V(G)$.
	Given~$G$,~$C$ and~$\orient$, clearly~$\orient'$ can be determined in linear time with respect to the size of~$G$ considering edges in the cycle at most twice and all other edges at most once.
\end{proof}

For the edge gadget, the basic idea is to start with a path adding additional edges in regular intervals to construct a near balanced graph with two orientations such that all vertices except for the two endpoints of the path have indegree~$\tau$ and one endpoint has indegree $\tau-1$ and the other has indegree~$\tau-1+\nicefrac{1}{q}$.
Connecting each endpoint of the path via a single edge with the corresponding vertex gadget then ensures that at least one edge in one of the three gadgets has to be deleted.
Recall that we want to encode that a vertex is part of a vertex cover if an edge is removed in the corresponding vertex gadget.
The excess value of~$\nicefrac{1}{q}$ will be small enough to distribute the available capacity generated by a single edge deletion to all incident edge gadgets.

\begin{lemma}%
\label{lem:edge_gadget}
	For every constant rational number~$\tau = \frac{p}{q} > 1$, where~$p$ and~$q$ are coprime, a graph $W$ with two special vertices~$u$ and~$w$ can be constructed in constant time, such that there exist two fractional orientations~$\orient_1$ and~$\orient_2$ with the following properties:
	\begin{enumerate}[(i)]
		\item $\deg_{\orient_1}^-(v) = \deg_{\orient_2}^-(v) = \tau$ for all~$v \in V(W)\setminus\{u,w\}$,
		\item $\deg_{\orient_1}^-(u) = \deg_{\orient_2}^-(w) = \tau - 1 + \frac{1}{q}$ and
		\item $\deg_{\orient_1}^-(w) = \deg_{\orient_2}^-(u) = \tau - 1$.
	\end{enumerate}
\end{lemma}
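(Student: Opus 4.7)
The plan proceeds in three stages: counting, construction, and orientation.

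\textbf{Stage 1 (Counting).} Let $n := |V(W)|$ and $m := |E(W)|$. Since each edge contributes exactly~$1$ to the sum of indegrees in any fractional orientation, summing conditions (i)--(iii) over all vertices yields
\[
m = (n-2)\tau + \bigl(\tau - 1 + \tfrac{1}{q}\bigr) + (\tau - 1) = \frac{np - 2q + 1}{q}.
\]
For $m \in \N$ we need $np \equiv -1 \pmod{q}$. Since $\gcd(p,q) = 1$, $p$ is invertible modulo~$q$, so this congruence has solutions; I would take the smallest one satisfying $n \geq 2$, thereby fixing both $n$ and $m$.

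\textbf{Stage 2 (Construction).} I would then build $W$ on $n$ vertices and $m$ edges. Following the idea sketched before the lemma, take a path $P = v_0 v_1 \cdots v_{n-1}$ with $v_0 = u$ and $v_{n-1} = w$ and add $m - (n-1)$ chord edges placed at regular intervals. The chord pattern is chosen so that every subgraph of $W$ has density at most~$\tau$ (so that \cref{lem:density_orientation} is applicable) and each internal vertex has enough incident edges to support indegree~$\tau$. Since $p$ and $q$ are fixed constants, this construction takes constant time.

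\textbf{Stage 3 (Orientations).} With $W$ in hand, I would construct $\orient_1$ edge-by-edge. Assign an initial value $\orient_1(v_0 v_1)^{v_0}$ so that $\deg^-_{\orient_1}(u) = \tau - 1 + \nicefrac{1}{q}$; then propagate: at each internal vertex $v_i$ the constraint $\deg^-_{\orient_1}(v_i) = \tau$, together with the values on already-oriented incident edges, determines the orientations of the remaining incident edges. Stage 2 guarantees that this propagation remains feasible, and the residual at the last step leaves $\deg^-_{\orient_1}(w) = \tau - 1$ by the counting identity of Stage 1. The orientation $\orient_2$ is obtained analogously by starting propagation from~$w$ instead of~$u$, or equivalently by applying \cref{lem:1-0-edge} to shift fractional orientation along a cycle through $u$ and $w$, so that the $\nicefrac{1}{q}$ excess moves from $u$ to $w$.

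\textbf{Main obstacle.} The most delicate step is Stage 2: exhibiting an explicit chord pattern that simultaneously realizes the edge count $(np - 2q + 1)/q$, respects the density bound $\tau$ on every subgraph, and provides enough flexibility to admit both orientations. The counting in Stage 1 pins down $n$ and $m$ completely, and \cref{lem:density_orientation} reduces the orientation requirement to a density condition, but producing a graph of the prescribed rational density with two distinguished ``boundary'' vertices requires care. This can be handled either by an ad-hoc periodic chord construction parametrised by $p$ and $q$, or by invoking a classical construction of balanced graphs of arbitrary rational density (such as Rucinski--Vince) as a black box and splicing in a path whose endpoints play the roles of $u$ and $w$.
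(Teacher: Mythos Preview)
Your Stage~1 counting is correct and the propagation idea in Stage~3 is the right mechanism, but Stage~2 has a real gap for $\tau > 2$. A path with occasional chords can give internal vertices indegree below~$2$ only, and taking the \emph{smallest} $n \ge 2$ with $np \equiv -1 \pmod q$ need not leave room for $m \approx n\tau$ edges in a simple graph. The paper's key move, which your proposal lacks, is a two-stage construction: write $p = cq + r$ with $1 \le r < q$, set $\tau' = (q+r)/q \in (1,2)$, and build the path gadget $W'$ for $\tau'$ by greedy propagation (choosing $n$ so that $n+1 \le m \le 2n-3$, hence each internal vertex needs at most one extra chord; the first chord goes to $w$ and all later chords come from $u$). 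To lift from $\tau'$ to $\tau$, attach a Ruci\'nski--Vince balanced graph $A$ of density $\tau$, select $c-1$ of its vertices, and join each of them to every vertex of $W'$; orienting these new edges fully toward $W'$ adds exactly $c-1$ to every indegree there. Your ``splicing in a path'' remark does not capture this structure.

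Two smaller issues. First, \cref{lem:1-0-edge} cannot produce $\phi_2$: it shifts orientation around a cycle while \emph{preserving} every indegree, so it cannot transfer the $1/q$ excess from $u$ to $w$. The paper instead pushes $1/q$ along the $u$--$w$ \emph{path} $u,v_2,\dots,v_j,w$ (using the first chord $\{v_j,w\}$) and verifies that no edge on it is already saturated in the wrong direction. Second, \cref{lem:density_orientation} only yields \emph{some} orientation with $\Delta^-_\phi = \rho^*(W) < \tau$, not the specific per-vertex profile (i)--(iii), so it does not reduce Stage~3 to a density check; the propagation argument carries the full weight.
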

{
\begin{proof}
	Let~$p = cq + r$ where~$c$ and~$r$ are integers and~$0 \leq r < q$.
	Note that~$r \geq 1$ since~$p$ and~$q$ are coprime and~$c \geq 1$ since~$\tau > 1$.
	Let~$p'\coloneqq p-(c-1)q=q+r$, and~$\tau' \coloneqq \nicefrac{p'}{q}$.
	Note that~$1<\tau'<2$ and that~$p'$ and~$q$ are coprime.
	We construct the graph~$W$ in two stages.
	First, we construct a graph~$W'$ for target density~$\tau'$ and then we use this construction to construct the graph~$W$ for target density~$\tau$.

	Note that if the graph~$W'$ has~$n$ vertices and fulfills the requirements for the fractional orientations (for the target density~$\tau'$), then the number~$m$ of edges in it fulfills~${m = \tau' n - 2 + \frac{1}{q}}$.
	This number is not necessarily an integer, but since~$p'$ and~$q$ are coprime, we can choose~$n$ such that~$(p' \cdot n)\bmod q = q-1$ (see for example~\citet{burton2006elementary} for more details).
	Then, ${m = \tau' n - 2 + \frac{1}{q} = \frac{p'n + 1 - 2q}{q}}$ is an integer.
	Moreover, we can increase~$n$ by~$aq$ for any positive integer~$a$ and this increases~$m$ by~$a p'$ and thus~$m$ remains an integer.
	Since~$2q > p' > q$ as~$1 < \tau' < 2$, this allows us to pick\footnote{Use the extended Euclidean algorithm to first get an integer~$t$, such that~$1 \equiv t p' \mod q$.
	Setting~$n' = ((q-1)t) \bmod q$ satisfies~$(n'p') \bmod q = ((q-1) t p') \bmod q = q-1$.
	Let~$m' = \frac{p'}{q}n' -2 + \frac{1}{q}$.
	If~$n'+1 \leq m'$ (implying~$n' \geq 4$¸), set~$n = n'$, $m = m'$, otherwise iteratively increment~$n'$ by~$q$ until the condition holds.
	Since every increase of~$n'$ by~$q$ increases~$m'$ by~$p'$ with~$p' < q$ and since initially~$n' < q$, this procedure terminates after at most~$q$ steps.}
	a positive integer~$n \geq 4$ such that~$m$ is also a positive integer with~$n+1 \leq m \leq 2n-3$.

	To construct~$W'$ with~$n$ vertices and~$m$ edges, we start with a path~${(u=v_1, \dots, v_n = w)}$ and iteratively add edges to the construction while simultaneously ensuring the existence of the desired fractional orientation~$\orient'_1$.
	We will later show how to modify~$\orient'_1$ to get~$\orient'_2$.
	First, we orient the edge~$\{u,v_2\}$ with a value of~$\tau' - 1 + \nicefrac{1}{q}$ to~$u$ and with a value of~${1 - (\tau' - 1 + \nicefrac{1}{q}) = 2 - \frac{p'-1}{q}}$ to~$v_2$.
	This is possible since~$1 \leq \tau' = \frac{p'}{q} < \frac{p'+1}{q} \leq 2$.
	Now starting from~$i = 2$ and for increasing values, we orient the edge~$\{v_i,v_{i+1}\}$ in such a way, that~$v_i$ gets indegree exactly~$\tau'$ from the already oriented edges until we reach a point where even assigning a value of~$1$ is not enough.
	In the latter case, we add the edge~$\{v_i,w\}$ to the graph and orient it completely to~$v_i$.
	We then continue with the edge~$\{v_i,v_{i+1}\}$ as before but now, whenever a value of~$1$ is not enough, we add the edge~$\{u,v_i\}$ and completely orient it towards~$v_i$.
	See \cref{fig:edge_gadget} for an illustration.
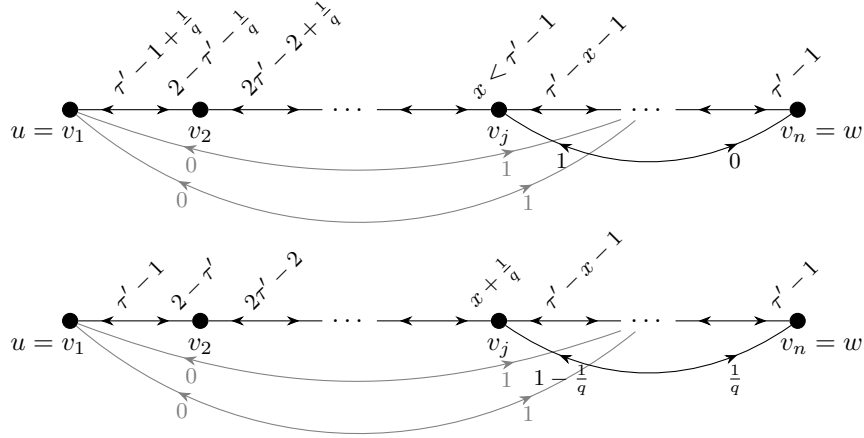
\begin{figure}
	\centering
	\begin{tikzpicture}[node distance = 1.5cm]
        \begin{scope}[decoration={markings, mark=at position 0.2 with{\arrowreversed{Stealth}}, mark=at position 0.8 with{\arrow{Stealth}}}, start chain=going right]
            \node[on chain, vertex_small, label={[label position=below]{$u = v_{1}\phantom{ = u}$}}] (v1){};
            \node[on chain, vertex_small,label=below:{$v_{2}$}] (v2){};
            \node[on chain] (dots1) {\ldots};
            \node[on chain, vertex_small,label=below:{$v_{j}$}] (v3){};
            \node[on chain] (dots2) {\ldots};
            \node[on chain, vertex_small, label=below:{$\phantom{w = }v_{n}=w$}] (vn) {};

            \foreach \i/\j/\toi/\toj in {v1/v2/$\tau'-1+\frac{1}{q}$/$2-\tau'-\frac{1}{q}$, v2/dots1/$2\tau'-2+\frac{1}{q}$/, dots1/v3/ /$x<\tau'-1$, v3/dots2/$\tau'-x-1$/, dots2/vn/ /$\tau'-1$}
            {
                \draw[postaction={decorate}] (\i) -- (\j) node[pos=0.2, above, label={[rotate=45]right:\small\toi}]{} node[pos=0.7, above, label={[rotate=45]right:\small\toj}]{};
            }
            \foreach \i in {1,...,2}
            {
                \draw[gray, postaction={decorate}] [bend right = \i*20]  (v1) to node[pos=0.2, below]{\footnotesize0} node[pos=0.8, below]{\footnotesize1} (dots2);
            }
            \draw[postaction={decorate}] [bend right = 35]  (v3) to node[pos=0.2, below]{\footnotesize1} node[pos=0.8, below]{\footnotesize0} (vn);

        \end{scope}

        \begin{scope}[yshift=-2.8cm, decoration={markings, mark=at position 0.2 with{\arrowreversed{Stealth}}, mark=at position 0.8 with{\arrow{Stealth}}}, start chain=going right]
            \node[on chain, vertex_small, label={[label position=below]{$u = v_{1}\phantom{ = u}$}}] (v1){};
            \node[on chain, vertex_small,label=below:{$v_{2}$}] (v2){};
            \node[on chain] (dots1) {\ldots};
            \node[on chain, vertex_small,label=below:{$v_{j}$}] (v3){};
            \node[on chain] (dots2) {\ldots};
            \node[on chain, vertex_small, label=below:{$\phantom{w = }v_{n}=w$}] (vn) {};

            \foreach \i/\j/\toi/\toj in {v1/v2/$\tau'-1$/$2-\tau'$, v2/dots1/$2\tau'-2$/, dots1/v3/ /$x + \frac{1}{q}$, v3/dots2/$\tau'-x-1$/, dots2/vn/ /$\tau'-1$}
            {
                \draw[postaction={decorate}] (\i) -- (\j) node[pos=0.2, above, label={[rotate=45]right:\footnotesize\toi}]{} node[pos=0.7, above, label={[rotate=45]right:\footnotesize\toj}]{};
            }
            \foreach \i in {1,...,2}
            {
                \draw[gray, postaction={decorate}] [bend right = \i*20]  (v1) to node[pos=0.2, below]{\footnotesize0} node[pos=0.8, below]{\footnotesize1} (dots2);
            }
            \draw[postaction={decorate}] [bend right = 35]  (v3) to node[pos=0.2, below]{\footnotesize$1-\frac{1}{q}$} node[pos=0.8, below]{\footnotesize$\frac{1}{q}$} (vn);

        \end{scope}
	\end{tikzpicture}
	\caption{Illustration of the edge gadget described in~\cref{lem:edge_gadget} with fractional orientation~$\phi_1$ on the top and~$\phi_2$ on the bottom.
	Vertices~$\{u=v_1, \dots, v_j, \dots, v_n = w\}$ form a path to which edges are added to ensure an indegree of~$\tau'$ w.r.t.~$\phi_1$ for all vertices except~$u$ and~$w$ (which have an indegree of~$\tau' - 1 + \frac{1}{q}$ and~$\tau' - 1$ respectively).
	Vertex~$v_j$ is the first vertex which requires an additional edge to reach~$\tau'$ and~$x$ is the amount of fractional orientation it receives from~$v_{j-1}$.
	To create~$\phi_2$ from~$\phi_1$, fractional orientation is ``pushed'' from~$u$ to~$w$, flipping their indegrees.}
	\label{fig:edge_gadget}
\end{figure}

	Note that since~$m \geq n+1$ (we will next show that the constructed graph has exactly~$m$ edges) and~$n-1$ edges belong to the initial path~$P$, we add at least two edges in that way and hence the edges we try to add are not part of the path already (and we never try to add two edges to the same vertex as~$\tau'<2$).
	Moreover, since~$m \leq 2n-3$ and~$n-1$ edges belong to the initial path, we only add at most~$n-2$ edges and hence there are enough vertices to add sufficiently many edges.
	We next show that after doing the above for all~$2 \leq i \leq n-1$, the graph contains exactly~$m$ edges and~$\orient'_1$ fulfills all requirements from the lemma statement (for target density~$\tau'$).
	By construction, the indegree of~$u$ with respect to~$\orient'_1$ is~$\tau'-1+\frac{1}{q}$ and the indegree of each vertex~$v_i$ with~$2 \leq i \leq n-1$ is~$\tau'$.
	Added over all vertices, the sum of in-degrees is~$(n-1)\tau'-1+\frac{1}{q}$.
	The indegree of~$w$ is some value~$t$ between~$0$ and~$1$ (only the edge~$\{v_{n-1},w\}$ assigns some positive value to~$w$).
	Since the sum of indegrees equals the number of edges, it holds that~$(n-1)\tau' - 1 + \nicefrac{1}{q} + t \in \N$.
	Note that this is the case for~$t = \tau' - 1$ since then, $(n-1)\tau' - 1 + \nicefrac{1}{q} + t = n\tau' - 2 + \nicefrac{1}{q} = \frac{p'n+1}{q}-2$, which is an integer by the choice of~$n$.
	Moreover, $t = \tau' - 1$ is the only value between~$0$ and~$1$ for which this is the case as increasing/decreasing the number of edges by one also increases or decreases~$t$ by~$1$.
	Hence, the number of edges in the constructed graph is~$n \tau' - 2 + \nicefrac{1}{q} = m$, and~$t=\tau'-1$.
	Thus, the orientation~$\orient'_1$ fulfills all requirements of the lemma statement.

	It remains to show the existence of the fractional orientation~$\orient'_2$ and to adapt the construction for target densities larger than~$2$.
	Towards the former, let~$j$ be the index where we added the edge~$\{v_j,w\}$.
	Note that this index exists as~$m \geq n$.
	Intuitively, we modify~$\orient'_1$ by sending an orientation value of~$\nicefrac{1}{q}$ from~$u$ to~$w$ along the path~$(u,v_2, \dots,v_j,w)$.
	Note that by construction all orientations in~$\orient'_1$ are multiples of~$\nicefrac{1}{q}$.
	To construct~$\orient'_2$ from~$\orient'_1$ more formally, we set (for all $i \in [2,j-1]$)
    \begin{align*}
    	\orient'_2 (uv_2)^{u} &\coloneqq \orient'_1(uv_2)^{u} - \frac{1}{q}, &
		\orient'_2 (uv_2)^{v_2} &\coloneqq \orient'_1(uv_2)^{v_2} + \frac{1}{q}, \\
		\orient'_2 (v_iv_{i+1})^{v_i} &\coloneqq \orient'_1(v_iv_{i+1})^{v_i} - \frac{1}{q},&
		\orient'_2 (v_iv_{i+1})^{v_{i+1}} &\coloneqq \orient'_1(v_iv_{i+1})^{v_{i+1}} + \frac{1}{q},\\
		\orient'_2 (v_jw)^{v_j} &\coloneqq \orient'_1(v_jw)^{v_j} - \frac{1}{q}, \text{ and} &
		\orient'_2 (v_jw)^{w} &\coloneqq \orient'_1(v_jw)^{w} + \frac{1}{q}.
    \end{align*}
	Note that if none of the above edges assign a value of~$1$ to the ``later'' vertex (the one with higher index), then the above modification is possible and~$\orient'_2$ also fulfills all requirements from the lemma statement.
	Assume towards a contradiction that some edge~$\{v_i,v_{i'}\}$ assigns a value of~$1$ to the later vertex.
	Let~$i < i'$ without loss of generality and note that by construction~$i' \neq n$ as the edge~$\{v_j,w\}$ is fully assigned to~$v_j$ and~$i \neq 1$ as~$\tau' > 1$ (and hence~$u$ is assigned a positive value~$\tau' - 1 + \nicefrac{1}{q} > \nicefrac{1}{q}$).
	Hence,~$i'=i+1$ and~$i < j$.
	This implies that~$v_i$ is assigned a value of at most~$1< \tau'$ by~$\orient'_1$ as it is only incident to~$\{v_{i-1},v_i\}$ and~$\{v_i,v_{i+1}\}$ as~$i < j$ and~$j$ is the first index where we added an additional edge.
	Thus, the assumption that~$\{v_i,v_{i'}\}$ is fully oriented towards~$v_{i'}$ leads to a contradiction.
	This shows that the above modification is possible and~$\orient'_2$ exists as claimed. %

	Finally, we adapt the construction to increase the density from~$\tau' = (q+r)/q$ by~$c-1$ to~$(q+r)/q + c-1 = (cq + r)/q = p/q = \tau$.
	To this end, we add a balanced graph~$A$ with density~$\tau$ (which always exists and~\mbox{\citet{RV1986}} show how to construct such a graph for each~$\tau \geq 1$).
	Note that since any graph with~$n$ vertices has density at most~$\binom{n}{2}/n=(n-1)/2<n$, it holds that~$A$ contains at least~$c-1$ vertices (but the size of~$A$ is constant for each constant~$\tau$).
	We select a set~$D$ of~$c-1$ arbitrary vertices in~$A$.
	To conclude the construction of the edge gadget, we add an edge between each vertex in~$W'$ and each vertex in~$D$.
	To construct the orientations~$\orient_1$ and~$\orient_2$ for target density~$\tau$, we start with an orientation for~$A$ where each vertex is assigned exactly~$\tau$.
	Note that this exists as~$A$ is balanced and has density~$\tau$.
	For the edges in~$W'$, we use the orientations~$\orient'_1$ for~$\orient_1$ and~$\orient'_2$ for~$\orient_2$, respectively.
	Finally, we orient all edges between~$W'$ and~$D$ completely towards the vertices in~$W'$.
	It is now easy to verify that all requirements of the lemma are satisfied by~$\orient_1$ and~$\orient_2$.
	Clearly if~$p$ and~$q$ are constant, the construction takes constant time.
\end{proof}
}
We are now ready to prove the main result of this section. This will conclude the proof of \cref{thm:dichotomy}, which then follows from \cref{thm:bded_algorithms} (for the algorithmic part) and \cref{thm:hardness_below_one,thm:hardness_between_1_and_2} (for the NP-hardness).

\begin{theorem}
	\label{thm:hardness_between_1_and_2}
	\ac{tbded} is NP-hard for any~$\tau > 1$ with~$\tau \notin \HI$.
\end{theorem}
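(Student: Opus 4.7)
The plan is to reduce from \textsc{Vertex Cover} on $q$-regular graphs, which is NP-hard for every fixed $q\ge 3$. Writing $\tau=p/q$ in lowest terms, the hypothesis $\tau>1$ and $\tau\notin\HI$ forces $q\ge 3$, so this restriction is available. Starting from a $q$-regular graph $G$ with budget $k$, I would build a \ac{tbded} instance $(G^*,k^*)$ with $k^*:=k$ by attaching one \emph{vertex gadget} per vertex and one \emph{edge gadget} per edge, together with two connecting edges per edge of $G$. Since $p,q$ are constants, both gadget types have constant size, so the construction takes polynomial time.

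Each vertex gadget $B_v$ would be a balanced graph of density exactly $\tau$, which exists by \citet{RV1986}. Since $\tau>1$, $B_v$ contains a cycle, so combining \cref{lem:density_orientation} with \cref{lem:1-0-edge} I would fix a fractional orientation $\phi_v$ of $B_v$ in which every vertex has indegree exactly $\tau$ and some \emph{special edge} $e_v^*=\{a_v,b_v\}$ is fully oriented toward the \emph{anchor} $a_v$. For each edge $e=\{u,v\}\in E(G)$, the edge gadget $W_e$ comes from \cref{lem:edge_gadget} with special vertices $x_e$ (joined to $a_u$ by a connecting edge) and $y_e$ (joined to $a_v$).

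For the forward direction, given a vertex cover $S$ of size $k$, I would delete $F:=\{e_v^*\mid v\in S\}$ and exhibit a fractional orientation of $G^*-F$ with maximum indegree at most $\tau$, invoking \cref{lem:density_bound}. On $B_v$ with $v\notin S$ I would reuse $\phi_v$; on $B_v$ with $v\in S$ I would restrict $\phi_v$ to $B_v-e_v^*$, which drops the indegree of $a_v$ to $\tau-1$ while leaving every other vertex at $\tau$. For each $e=\{u,v\}\in E(G)$ I would pick an endpoint in $S$ (say $u$), choose the orientation among $\phi_1,\phi_2$ from \cref{lem:edge_gadget} that makes $x_e$ have internal indegree $\tau-1+\tfrac{1}{q}$ and $y_e$ have $\tau-1$, then split the connecting edge $\{x_e,a_u\}$ as $1-\tfrac{1}{q}$ toward $x_e$ and $\tfrac{1}{q}$ toward $a_u$, and orient $\{y_e,a_v\}$ fully toward $y_e$. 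Every vertex of $W_e$ then reaches indegree exactly $\tau$, and each anchor $a_u$ with $u\in S$ absorbs at most $\tfrac{1}{q}$ per incident edge, accumulating indegree at most $(\tau-1)+q\cdot\tfrac{1}{q}=\tau$.

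For the backward direction I would set $S:=\{v\in V(G)\mid F\cap E(B_v)\neq\emptyset\}$ and argue that $S$ can be extended to a vertex cover of size at most $|F|\le k$. For each edge $e=\{u,v\}\in E(G)$ with $u,v\notin S$, consider the subgraph $H_e$ of $G^*$ induced by $V(W_e)\cup V(B_u)\cup V(B_v)$; a short calculation using $|E(W_e)|=\tau|V(W_e)|-2+\tfrac{1}{q}$, $|E(B_w)|=\tau|V(B_w)|$, and the two connecting edges yields $|E(H_e)|=\tau|V(H_e)|+\tfrac{1}{q}$, so that $\rho(H_e-F)\le\tau$ forces at least one edge of $F$ to lie inside $E(W_e)\cup\{\{x_e,a_u\},\{y_e,a_v\}\}$. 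These "charges" are pairwise disjoint across distinct edges of $G$ and disjoint from the vertex gadgets, so $|F|\ge|S|+U$, where $U$ counts edges of $G$ left uncovered by $S$; adding one endpoint per uncovered edge then yields a vertex cover of size at most $|F|\le k$. The main obstacle I expect is this backward-direction counting, in particular identifying the correct dense subgraph $H_e$ for each uncovered edge so that the resulting charges remain disjoint across the entire construction, and verifying that the excess $\tfrac{1}{q}$ provided by \cref{lem:edge_gadget} is aligned exactly with the capacity released by deleting one edge per vertex gadget.
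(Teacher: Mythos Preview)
Your proposal is correct and follows essentially the same approach as the paper: a reduction from \textsc{Vertex Cover} on $q$-regular graphs using balanced vertex gadgets of density~$\tau$ (with a special edge fully oriented to an anchor via \cref{lem:1-0-edge}) and edge gadgets from \cref{lem:edge_gadget}, connected exactly as you describe, with $k^*=k$. The forward direction is identical to the paper's; for the backward direction the paper directly maps each edge of~$F$ to a vertex of~$C$ (gadget- or connecting-edge $\mapsto$ the corresponding vertex, edge-gadget edge $\mapsto$ an arbitrary endpoint) and then derives a contradiction from an uncovered edge via the same density count $|E(H_e)|=\tau|V(H_e)|+\tfrac{1}{q}$, whereas you phrase this as a charging argument---but the two are equivalent.
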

\begin{proof}
	Let~$\tau = \nicefrac{p}{q} > 1$ where~$p$ and~$q$ are coprime.
	Note that since~$\tau \notin \HI$, it holds that~$q\geq 3$.
	We give a reduction from \acl{vc} on~$q$-regular graphs.
	It is folklore knowledge that this problem is NP-complete.\footnote{Amiri~\cite{Amiri21} showed that the problem cannot be solved in~$2^{o(m)}$ time unless the ETH fails. The same reduction also shows NP-hardness, but this is not explicitly mentioned.}
	The idea is to construct a vertex gadget for each vertex and an edge gadget for each edge in~$G$ using \cref{lem:1-0-edge,lem:edge_gadget}, respectively.

	Let~$(G = (V,E), k)$ be an instance of \acl{vc} where~$G$ is~$q$-regular.
	We construct an equivalent instance~$(G',k')$ of \ac{tbded} as follows.
    For each vertex~$v \in V$, we add a vertex gadget~$H_v$ with a special edge~$e_v$ and an incident vertex~$x_v$ to~$G'$.
	The graph~$H_v$ is any balanced graph with density~$\tau$ (which always exists and~\mbox{\citet{RV1986}} show how to construct such a graph for each~$\tau \geq 1$).
	Let~$\orient_v$ be any fractional orientation for~$H_v$ where each vertex has indegree~$\tau$ and let~$C$ be any cycle in~$H_v$.
	Both can be found in polynomial time.
	Using \cref{lem:1-0-edge}, we construct another fractional orientation~$\orient_v'$ where each vertex has indegree~$\tau$ and one edge~$e$ fully assigns its orientation to one of its endpoints~$u$.
	We set~$e_v \coloneqq e$ and~$x_v \coloneqq u$.
	This concludes the construction of the vertex gadget.
	For each edge~$e=\{u,v\}\in E$, we add the edge gadget~$W_e$, which is the graph obtained from the construction in \cref{lem:edge_gadget} for density~$\tau$ with special vertices~$w_u$ and~$w_v$.
	We connect the gadgets by adding the edges~$\{x_u, w_u\}$ and~$\{x_v,w_v\}$ to our construction.
	Note that we can use \cref{lem:edge_gadget} since we assumed that~$p$ and~$q$ are coprime.
	To conclude the construction, we set~$k'\coloneqq k$.
	A visualization of the reduction is shown in \cref{fig:hardness_above_one}.
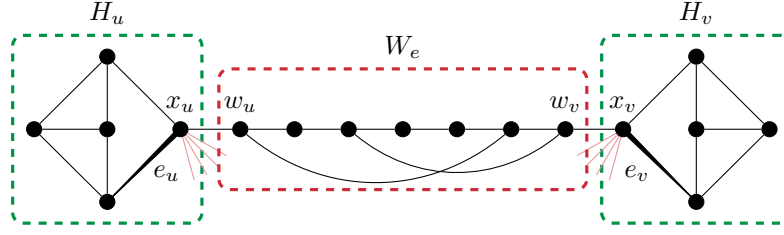
\begin{figure}
	\centering
	\begin{tikzpicture}[node distance = .75cm]

        \begin{scope}
         \node[vertex_small] (lc){};
         \node[vertex_small,above=of lc] (l1){};
         \node[vertex_small,left=of lc] (l2){};
         \node[vertex_small,below=of lc] (l3){};
         \node[vertex_small,right=of lc,label={[label position=above]{$x_u$}}] (l4){};

         \node[inner sep = 5pt, very thick, italyGreen, draw,rounded corners, dashed, fit=(l1) (l2) (l3) (l4),label=above:$H_u$](C) {};

         \foreach \i/\j in {l1/l2, l2/l3, l4/l1, lc/l1, lc/l2, lc/l3}
         {
            \draw (\i) -- (\j);
         }

         \draw[fill] (l3.center) -- ($(l4.center)!0.3!(l4.north west)$) -- ($(l4.center)!0.3!(l4.south east)$) -- (l3.center);
         \draw (l3) -- node[midway,xshift=.3cm,yshift=-.1cm]{$e_u$} (l4);

         \foreach \i in {1,...,4}
         {
            \draw[italyRed!50] (l4) -- ++(\i*-15-15:.7);
         }

        \end{scope}

        \begin{scope}[xshift=7.75cm]
         \node[vertex_small] (rc){};
         \node[vertex_small,above=of rc] (r1){};
         \node[vertex_small,right=of rc] (r2){};
         \node[vertex_small,below=of rc] (r3){};
         \node[vertex_small,left=of rc,label={[label position=above]{$x_v$}}] (r4){};

         \node[inner sep = 5pt, very thick, italyGreen, draw,rounded corners, dashed, fit=(r1) (r2) (r3) (r4),label=above:$H_v$](C) {};

         \foreach \i/\j in {r1/r2, r2/r3, r4/r1, rc/r1, rc/r2, rc/r3}
         {
            \draw (\i) -- (\j);
         }

         \draw[fill] (r3.center) -- ($(r4.center)!0.3!(r4.north east)$) -- ($(r4.center)!0.3!(r4.south west)$) -- (r3.center);
         \draw (r3) -- node[midway,xshift=-.3cm,yshift=-.1cm]{$e_v$} (r4);

         \foreach \i in {1,...,4}
         {
            \draw[italyRed!50] (r4) -- ++(\i*15+195:.7);
         }

        \end{scope}

        \begin{scope}[decoration={markings, mark=at position 0.2 with{\arrowreversed{Stealth}}, mark=at position 0.8 with{\arrow{Stealth}}}, start chain=going right, node distance=.5cm,xshift=1.75cm]
            \node[on chain, vertex_small, label={[label position=above]{$w_u$}}] (v1){};
            \node[on chain, vertex_small] (v2){};
            \node[on chain, vertex_small] (v3){};
            \node[on chain, vertex_small] (v4){};
            \node[on chain, vertex_small] (v5){};
            \node[on chain, vertex_small] (v6){};
            \node[on chain, vertex_small, label={[label position=above]{$w_v$}}] (v7){};

            \node[inner sep = 5pt, minimum height=1.6cm, very thick, italyRed, draw,rounded corners, dashed, fit=(v1) (v7),label=above:$W_e$](C) {};

            \foreach \i/\j in {v1/v2, v2/v3, v3/v4, v4/v5, v5/v6, v6/v7}
            {
                \draw (\i) -- (\j);
            }

            \draw[bend right=40] (v1) to (v6);
            \draw[bend right=40] (v3) to (v7);

        \end{scope}

        \draw (v1)--(l4);
        \draw (v7)--(r4);
	\end{tikzpicture}
	\caption{Illustration of the hardness reduction for~$\tau =p/q > 1$.
	Depicted is the construction for two vertices~$u,v$ in the \textsc{Vertex Cover} instance connected by an edge~$e=\{u,v\}$ with~$\tau=\frac{7}{5}$.
	To the left and right are vertex gadgets~$H_u$ and~$H_v$ (highlighted in green) with special vertices~$x_u$ and~$x_v$ and the special edges~$e_u$ and~$e_v$ oriented fully towards~$x_u$ and~$x_v$ respectively (indicated with the thickening edge).
	The vertex gadgets are connected by the edge gadget~$W_e$ with special vertices~$w_u$ and~$w_v$.
	Light red lines indicate edges from~$x_u$ and~$x_v$ to other edge gadgets.
	Note that for $\tau=\frac{7}{5}$, a balanced graph on~5 vertices with~7 edges can be created using the construction of \citet{RV1986}.
	For~$p > {q \choose 2}$, observe that a graph on~$pq$ vertices and~$p^2$ edges (of density~$\tau =p/q$) can be created.}
	\label{fig:hardness_above_one}
\end{figure}

	Since the reduction can be computed in polynomial time as the graphs~$H_v$ and~$W_e$ have constant size, it only remains to show correctness.
	To this end, first assume that~$(G,k)$ is a yes-instance of \acl{vc} and let~$C$ be a vertex cover of size at most~$k$ in~$G$.
	For each~$v \in C$, we add the edge~$e_v$ in the vertex gadget~$H_v$ to the solution set~$F$.
	Note that these are~$|C| \leq k = k'$ edges.
	By \cref{lem:density_bound}, it suffices to provide a fractional orientation~$\orient$ of~$G' - F$ with~$\Delta_{\orient}^- \leq \tau$, which we construct as follows.
	We use the fractional orientation~$\orient'_v$ for each vertex gadget (with potentially the edge~$e_v$ removed).
	Hence, the indegree of each vertex in a vertex gadget is~$\tau$ unless it is the vertex~$x_v$ in a vertex gadget corresponding to a vertex~$v \in C$.
	In this case, the indegree is~$\tau-1$.
	We next construct the fractional orientation for each edge gadget~$W_e$ for an edge~$e = \{u,v\}$ (including the two edges connecting it to vertex gadgets).
	We assume without loss of generality that~$u \in C$ (if both are contained, we pick one endpoint arbitrarily; the case where only~$v$ is part of the vertex cover is analogous).
	This means that the vertex~$x_u$ has indegree~$\tau-1$ in the current orientation.
	We use \cref{lem:edge_gadget} to obtain a fractional orientation of the edges in~$W_e$ such that now all vertices in it except for~$w_u$ and~$w_v$ have indegree~$\tau$ and~$w_u$ has indegree~$\tau-1+\nicefrac{1}{q}$ and~$w_v$ has indegree~$\tau-1$.
	We orient the edge~$\{w_v, x_v\}$ completely towards~$w_v$ and the edge~$\{w_u, x_u\}$ with a value of~$1-\nicefrac{1}{q}$ to~$w_u$ and with a value of~$\nicefrac{1}{q}$ to~$x_u$.
	By doing this for every edge~$e$, we obtain a fractional orientation~$\phi$ of~$G' - F$ where each vertex has indegree at most~$\tau$ since each vertex~$x_u$ is only connected to at most~$q$ edge gadgets that assign an orientation of~$\nicefrac{1}{q}$ each.

	Conversely, assume that the constructed instance of \ac{tbded} is a yes-instance and let~$F$ be a solution of size at most~$k'=k$.
	We construct a vertex cover $C$ of size at most~$k$ for $G$ as follows.
	For all edges $e \in F$ which are contained in some vertex gadget~$H_v$ or connect a vertex in~$H_v$ with an edge gadget, we add~$v$ to the vertex cover~$C$.
	If~$e\in F$ is contained in an edge gadget~$W_{e'}$, then we arbitrarily pick one of the two endpoints of~$e'$ and add it to~$C$.
	Since we selected at most~$k$ vertices, it only remains to prove that~$C$ is a vertex cover.
	Assume towards a contradiction that this is not the case.
	Then, there exists some edge~$e=\{u,v\}$ where~$u,v \notin C$.
	Note that this implies that no edge incident to a vertex in the vertex gadgets corresponding to~$u$ and~$v$ and no edge in the edge gadget corresponding to~$e$ are contained in~$F$.
	Let~$n_v$ be the number of vertices in a vertex gadget and let~$n_e$ be the number of vertices in an edge gadget.
	The union of the three aforementioned gadgets (plus the two edges between the gadgets) contains~$2n_v+n_e$ vertices and~$\tau(2n_v) + (\tau n_e - 2 + \nicefrac{1}{q}) + 2 = \tau (2n_v + n_e) + \nicefrac{1}{q} > \tau (2n_v + n_e)$ edges.
	Hence, the density of this subgraph is larger than~$\tau$, contradicting the fact that~$F$ is a solution.
	This concludes the proof.
\end{proof}

\section{Conclusion}
\label{sec:concl}

We provided a complexity dichotomy of \ac{tbded} with respect to the target density~$\tau$ by showing that the problem is polynomial-time solvable if and only if~$\tau$ is half-integral or smaller than~$\nicefrac{2}{3}$ (assuming P${}\ne{}$NP).
We also observed an encouraging result regarding the treewidth of the input graph.
Studying \ac{tbded} with respect to further natural parameters like clique-width, or the diameter of the input graph and/or from the viewpoint of approximation algorithms are natural next steps.
Another next step is to analyze cases in which~$\tau$ is not constant but can depend on e.g.~$n$. 
Our polynomial-time algorithms for (half-)integral target densities generalize to this setting.
However, our hardness results and the FPT algorithm for treewidth do not.
It is known that for some value~$\tau = 1 - 1/o(n)$ depending on~$n$ the problem becomes W[1]-hard with respect to the treewidth \cite{BNV25} and it is polynomial-time solvable for~$\tau = 1-\nicefrac{1}{n}$ since this is equivalent to computing a feedback edge set of the input graph.
Another generalization our polynomial-time algorithms cover are weighted problem variants of \ac{tbded} as the weighted variants of \ac{flow} and \ac{gf} are polynomial-time solvable~\cite{DP18,Sch03}.

Concerning \ac{rflow}, an interesting question is whether the efficiently solvable cases discovered in this work (max-gap at most one, bounded treewidth) can be extended to cover larger classes of networks.
For example, the NP-hardness for the case that only some arcs have an even flow restriction requires to ``stack'' several of them on paths:
Each path from~$s$ to~$t$ in the constructed acyclic flow instance contains exactly three arcs requiring even flow (see \cref{app:np_hardness_subset_even_flow}).
Hence, the question arises if the problem is polynomial-time solvable if each such restricted arc is incident to~$s$ or~$t$.

\nocite{MS12}
\nocite{DP18}
\nocite{MSS21}
\nocite{CCK25}

\newpage

\bibliographystyle{plainnat}
\bibliography{References.bib}

\newpage

\appendix

\section{NP-hardness of \textsc{Subset Even Flow}}

\label{app:np_hardness_subset_even_flow}
Consider the \acl{flow} problem on digraph~$D$ with integral capacities~$c\colon E \rightarrow \N$ with the additional constraint that for a proper subsetof arcs~$F \subsetneq A(D)$ the capacities must be even.
This variant is called \textsc{Subset Even Flow} and it is NP-hard.
While we were unable to find the result in the scientific literature, it is known and discussed publicly. We now repeat the proof sketch given by Neal Young on stackexchange\footnote{\url{https://cstheory.stackexchange.com/questions/51243/maximum-flow-with-parity-requirement} \url{-on-certain-edges}}.

\begin{theorem}[Neal Young]
	\textsc{Subset Even Flow} is NP-hard, even if all capacities are at most two.
\end{theorem}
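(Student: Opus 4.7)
The plan is to prove NP-hardness by a polynomial-time reduction from \textsc{3-SAT}. The key leverage is that an arc $a\in F$ of capacity~$2$ is forced to carry flow in $\{0,2\}$ and therefore behaves as a boolean indicator, which is precisely the primitive needed to encode satisfiability inside a flow network.

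Given a 3-CNF formula~$\varphi$ on variables $x_1,\dots,x_n$ with clauses $C_1,\dots,C_m$, I would construct an acyclic network~$D$ with source~$s$, sink~$t$, and capacities in $\{1,2\}$, organised in three layers of even-constrained arcs. For each variable~$x_i$, a \emph{variable gadget} consists of a node~$v_i$ fed by an arc $(s,v_i)$ of capacity~$2$ in~$F$, together with two outgoing arcs $(v_i,p_i)$ and $(v_i,n_i)$ of capacity~$2$ in~$F$ to the positive and negative literal nodes of~$x_i$; flow conservation and parity then force exactly one of the two outgoing arcs to carry $2$ units, which encodes the truth value of~$x_i$. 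For each clause~$C_j$, a \emph{clause gadget} is centred on a node~$c_j$ with an outgoing arc $(c_j,t)$ of capacity~$2$ in~$F$ and incoming capacity-$1$ arcs from the literal nodes satisfying~$C_j$. Since a single literal may appear in many clauses, each literal node is then split into one copy per clause occurrence, with the copies linked to the original literal node via a small balancing subnetwork of capacity-$1$ and capacity-$2$ even arcs, so that the $2$ units available at a ``true'' literal can be routed into the appropriate clause gadgets.

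Correctness would be established by calibrating the target flow value (roughly $2(n+m)$) so that the network admits a flow of this value if and only if~$\varphi$ is satisfiable. The forward direction builds a feasible flow from a satisfying assignment by saturating each variable-choice arc according to the assignment and routing units from true literals to the clauses they satisfy, saturating all clause-output arcs. The reverse direction observes that any flow attaining the target value must saturate every capacity-$2$ arc of~$F$ in a variable or clause gadget, thereby extracting a consistent truth assignment that satisfies every clause.

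The main obstacle is the design of the literal-distribution subnetworks: a literal may appear in many clauses, yet its variable provides only $2$ units of ``true'' flow, so this flow must be split along capacity-$1$ arcs and threaded through auxiliary even arcs into the correct clause copies without violating parity anywhere in the network. I expect this to require expanding each literal into a tree or chain of capacity-$2$ even arcs that propagates the ``true'' signal to every clause copy in an all-or-nothing fashion. This fits the hint in the paper that every $s$-$t$ path crosses exactly three arcs of~$F$, consistent with a three-layer decomposition into variable choice, literal distribution, and clause activation.
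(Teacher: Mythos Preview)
Your plan has a genuine gap at the point you yourself flag as ``the main obstacle,'' and it is not repairable along the lines you suggest. Flow is conserved: a chain of capacity-$2$ even arcs carrying $2$ units delivers $2$ units at its end, not $2$ per node, and a tree splits the $2$ units among its leaves. There is no way to broadcast a $2$-unit ``true'' signal from a literal node to arbitrarily many clause copies without injecting fresh flow from~$s$ at each copy, and gating such injections by the single choice made at~$v_i$ is precisely the hard part the reduction is supposed to solve---your sketch provides no mechanism for it. A related defect sits in the clause gadget itself: with capacity-$1$ arcs entering $c_j$ and a capacity-$2$ arc in~$F$ leaving it, saturating $(c_j,t)$ requires at least two units of inflow, hence at least two true literals in~$C_j$, whereas satisfiability promises only one. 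Raising the incoming arcs to capacity~$2$ only pushes the shortfall back onto the literal, which would then need $2k$ units for $k$ occurrences.

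The paper avoids both difficulties by reducing from \textsc{Independent Set} on \emph{cubic} graphs rather than from \textsc{3-SAT}. The degree bound means each vertex gadget must reach only three edge gadgets, and it is built to supply exactly six units---two per incident edge---when the vertex is selected; internal capacity-$1$ arcs force the gadget to be either fully saturated or fully empty. The edge gadget is a NAND: the two relevant top-layer vertices feed a common node whose single capacity-$2$ arc to~$t$ can be saturated by at most one of them, so no ``at least two satisfied'' condition arises, and the constant fan-out eliminates any need for amplification. If you wish to salvage a SAT-based route, you would at minimum need a bounded-occurrence SAT variant together with a consistency gadget linking several per-occurrence choosers for the same variable; neither ingredient is present in your plan.
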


\begin{proof}
	We reduce from \textsc{Independent Set} on cubic graphs (each vertex has exactly three neighbors), which is known to be NP-hard~\cite{GJ79}.
	Given a cubic graph~$G$, produce a \textsc{Subset Even Flow} instance~$(D, \capacity, F)$ as follows:
	Start with only a source~$s$ and a sink~$t$ in~$D$.
	Then, for every vertex~$v \in V(G)$ with neighbors~$u_1, u_2, u_3$ add a vertex gadget~$g_v$ to the instance (see the left in~\cref{fig:subset_even_flow} for an illustration), which consists of six vertices~$v_1,v_2,v_3$ (forming the bottom layer in \cref{fig:subset_even_flow}) and $vu_1,vu_2,vu_3$ (the top layer in the figure), and the arcs~$(s,v_i)$ with even capacity~$\capacity(s,v_i) \in \{0,2\}$ for~$i \in [3]$ (that is, add~$(s,v_i)$ to~$F$) and~$(v_i,vu_j)$ with capacity~$\capacity(v_i,vu_j) = 1$ for~$i,j \in [3], i \ne j$.
	
	For every edge~$\{u,w\} \in E(G)$ choose the unique vertices~$uw$, $wu$ in the top layer of~$g_u$ and~$g_w$ and build a gadget~$g_{uw}$ as shown in the right of~\cref{fig:subset_even_flow}:
	Add a vertex~$e_{uw}$ and the arcs~$(uw,e_{uw})$, $(wu,e_{uw})$, and~$(e_{uw},t)$; all three arcs have even capacities requirement.
	This completes the construction.

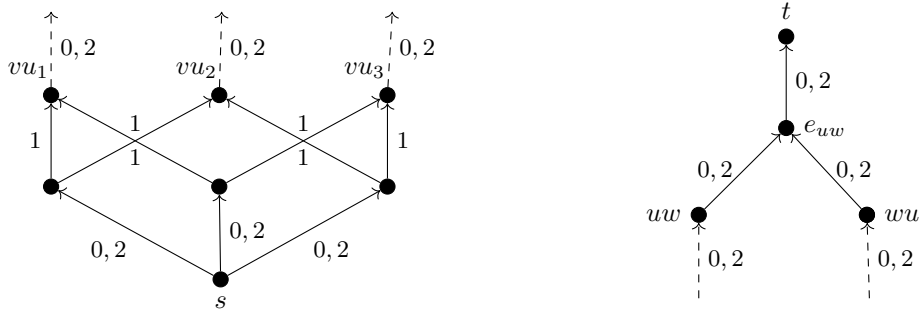
\begin{figure}
	\centering

	\begin{minipage}{0.4\textwidth}
	 \begin{tikzpicture}
			\begin{scope}
				\begin{scope}[node distance=1cm and 2cm]
					\node[] (v11) {};
					\node[vertex_small, right=of v11,label=below:$s$] (v12) {};
					\node[right=of v12] (v13) {};
					\node[vertex_small, above=of v11] (v21) {};
					\node[vertex_small, right=of v21] (v22) {};
					\node[vertex_small, right=of v22] (v23) {};
					\node[vertex_small, above=of v21, label={[xshift=-.3cm]$vu_1$}] (v31) {};
					\node[vertex_small, right=of v31, label={[xshift=-.3cm]$vu_2$}] (v32) {};
					\node[vertex_small, right=of v32, label={[xshift=-.3cm]$vu_3$}] (v33) {};
					\node[above=of v31] (v41) {};
					\node[right= of v41] (v42) {};
					\node[right= of v42] (v43) {};
				\end{scope}

				\foreach \i/\j/\txt/\p in {%
					v12/v21/${0,2}$/below left,
					v12/v22/${0,2}$/right,
					v12/v23/${0,2}$/below right,
					v21/v31/$1$/left,
					v21/v32/$1$/above,
					v22/v31/$1$/below,
					v22/v33/$1$/above,
					v23/v32/$1$/below,
					v23/v33/$1$/right}
					\draw [->] (\i) -- node[font=\small,\p] {\txt} (\j);

                \foreach \i/\j/\txt/\p in {%
					v31/v41/${0,2}$/right,
					v32/v42/${0,2}$/right,
					v33/v43/${0,2}$/right}
					\draw [dashed, ->] (\i) -- node[font=\small,\p] {\txt} (\j);
			\end{scope}

		\end{tikzpicture}
	\end{minipage}
	\hfill
	\begin{minipage}{0.4\textwidth}
	 \begin{tikzpicture}
			\begin{scope}
				\begin{scope}[node distance=1cm and 2cm]
					\node[] (v11) {};
					\node[right=of v11] (v12) {};
					\node[vertex_small, above=of v11, label=left:$uw$] (v21) {};
					\node[vertex_small, right=of v21, label=right:$wu$] (v22) {};
					\node[vertex_small, right=of v21] (v22) {};
					\node[vertex_small, above right=1cm and 1cm of v21,label=right:$e_{uw}$] (v3) {};
					\node[vertex_small, above=of v3, label=above:$t$] (v4) {};
				\end{scope}

				\foreach \i/\j/\txt/\p in {%
					v21/v3/${0,2}$/left,
					v22/v3/${0,2}$/right,
					v3/v4/${0,2}$/right}
					\draw [->] (\i) -- node[font=\small,\p] {\txt} (\j);

                \foreach \i/\j/\txt/\p in {%
					v11/v21/${0,2}$/right,
					v12/v22/${0,2}$/right}
					\draw [dashed, ->] (\i) -- node[font=\small,\p] {\txt} (\j);
			\end{scope}

		\end{tikzpicture}
	\end{minipage}

	\caption{On the left side a vertex gadget~$g_v$ for a vertex~$v \in V(G)$ with neighbors~$u_1,u_2,u_3$ is shown.
	The vertex~$s$ at the bottom is sole the source in the flow network and appears in every vertex gadget.
	On the right side an edge gadget~$g_{uw}$ for an edge~$\{u,v\} \in E(G)$ is depicted.
	Each bottom vertex of the edge gadget refers to a vertex in the top layer of vertex gadgets~$g_u$ and~$g_v$.
	The top vertex~$t$ is the sole sink in the network and appears in every edge gadget.
	Dashed lines indicate how much flow can enter or leave a vertex when properly connected, but they are not part of the construction.
	Capacity~${0, 2}$ indicates that only a flow of~0 or~2 (but not~1 or~1.5) can be sent along an arc.
	}
	\label{fig:subset_even_flow}
\end{figure}

	To prove correctness, we show that~$G$ contains an independent set of size~$k$ if and only if~$D$ admits an $s$-$t$-flow of value~$6k$ respecting the capacity restriction.
	This is based on the observation that the edge gadget encodes a NAND operation, allowing only one of the two vertices~$uw$ (from~$g_u$) and~$wu$ (from~$g_w$) to send flow through~$e_{uw}$ to~$t$.
	
	Given an independent set~$S \subseteq V(G)$ of size~$k$, send flow through all edges in each vertex gadget~$g_v$ with~$v \in S$.
	As~$S$ is an independent set, the flow can pass through the edge gadget~$g_{vu_i}$ to~$t$.
	Thus there is a flow of~$6k$ in~$D$.
	
	Conversely, observe that each vertex~$vu_i$ in~$D$ has one outgoing arc with even flow requirement.
	Thus, a flow solution respecting the parity restriction either saturates all edges in~$g_v$ or assigns flow~0 to all of them.
	Hence, a flow has value~$6k$ for some~$k \in \N$ and corresponds to an independent set of size~$k$ in~$G$.
\end{proof}

\end{document}